\newtheorem{theorem}{Theorem}[section]
\newtheorem{definition}[theorem]{Definition}
\newtheorem{claim}[theorem]{Claim}
\newtheorem{lemma}[theorem]{Lemma}
\newtheorem{corollary}[theorem]{Corollary}
\newtheorem{fact}[theorem]{Fact}
\newtheorem{remark}[theorem]{Remark}
\def\Output{{Output}}
\def\calM{{\cal M}}
\def\calF{{\cal F}}
\def\calB{{\cal B}}
\def\calC{{\cal C}}
\def\L{{\cal L}}
\def\Z{{\mathbb{Z}}}
\def\R{\mathbb{R}}
\def\N{\mathbb{N}}
\newcommand\Prob[2]{{\Pr_{#1}\left[ {#2} \right]}}
\newcommand{\eps}{\epsilon}
\renewcommand{\epsilon}{\varepsilon}
\newcommand{\rank}{\mathop{\mathrm{rank}}}
\newcommand{\linspan}{\mathop{\mathrm{span}}}
\newcommand{\Fset}{\mathbb{F}}         
\newcommand{\SVP}{\mathsf{SVP}}
\newcommand{\LIP}{\mathsf{LIP}}
\newcommand{\GIP}{\mathsf{GIP}}
\newcommand{\NP}{\mathsf{NP}}
\newcommand{\coNP}{\mathsf{coNP}}
\newcommand{\AM}{\mathsf{AM}}
\newcommand{\coAM}{\mathsf{coAM}}
\newcommand{\SZK}{\mathsf{SZK}}
\newcommand{\HVSZK}{\mathsf{HVSZK}}
\newcommand{\YES}{\mathsf{YES}}
\newcommand{\NO}{\mathsf{NO}}
\begin{document}

\title{{\bf On the Lattice Isomorphism Problem}}

\author{
 Ishay Haviv\thanks{School of Computer Science, The Academic College of Tel Aviv-Yaffo, Tel Aviv 61083, Israel.
}
 \and
 Oded Regev\thanks{Courant Institute of Mathematical Sciences, New York University. This material is based upon work supported by the National Science Foundation under Grant No.~CCF-1320188. Any opinions, findings, and conclusions or recommendations expressed in this material are those of the authors and do not necessarily reflect the views of the National Science Foundation.
 }}

\date{}

\maketitle

\begin{abstract}

We study the {\em Lattice Isomorphism Problem} ($\LIP$), in which given two lattices $\L_1$ and $\L_2$ the goal is to decide whether there exists an orthogonal linear transformation mapping $\L_1$ to $\L_2$. Our main result is an algorithm for this problem running in time $n^{O(n)}$ times a polynomial in the input size, where $n$ is the rank of the input lattices. A crucial component is a new generalized {\em isolation lemma}, which can isolate $n$ linearly independent vectors in a given subset of $\Z^n$ and might be useful elsewhere. We also prove that $\LIP$ lies in the complexity class $\SZK$.
\end{abstract}


\section{Introduction}

An $m$-dimensional {\em lattice} $\L$ of rank $n$ is defined as the set of all integer combinations of $n$ linearly independent vectors $b_1,\ldots,b_n \in \R^m$, which form a {\em basis} of the lattice. This mathematical object, despite its simplicity, hides a rich geometrical structure, which was extensively studied in the last decades by the theoretical computer science community. This was initiated by the discovery of the famous LLL algorithm in 1982~\cite{LLL82} and was further motivated by Ajtai's cryptographic application of lattices in 1996~\cite{ajtai96generating}. To date, lattices have numerous applications in several areas of computer science including algorithms, computational complexity and cryptography.

One of the most fundamental lattice problems is the Shortest Vector Problem ($\SVP$), where given a lattice basis the goal is to find a shortest nonzero vector in the lattice. This problem is known to be $\NP$-hard (under randomized reductions) for approximation factors which are almost polynomial in the lattice rank $n$~\cite{Khot05svp,HRsvp,MicSVP12} and to be solved in its exact version by algorithms of running time exponential in $n$~\cite{Kannan87,MicV10}. However, $\SVP$ with approximation factors of $\sqrt{n/\log n}$ and $\sqrt{n}$ is known to be in $\coAM$ and in $\coNP$ respectively~\cite{Goldreich:1998:LNA,AharonovR04}, hence is not $\NP$-hard for these factors unless the polynomial time hierarchy collapses. A major challenge in the area is to understand how hard $\SVP$ and related lattice problems are for polynomial approximation factors, as this is what lattice-based cryptography relies on.

This paper is concerned with the Lattice Isomorphism Problem ($\LIP$). Two lattices $\L_1$ and $\L_2$ are {\em isomorphic} if there exists an orthogonal linear transformation mapping $\L_1$ to $\L_2$. In $\LIP$ one wishes to decide whether two given lattices are isomorphic or not. The problem was studied by Plesken and Souvignier~\cite{PleskenS97} (using ideas from the earlier work~\cite{PleskenP85}) who suggested algorithms that can solve the problem in low dimensions for specific lattices of interest. The asymptotic complexity of the problem was later considered by Dutour Sikiri\'{c}, Sch{\"u}rmann, and Vallentin~\cite{LatticeIso}, and it also showed up in cryptographic applications of lattices~\cite{Szydlo03}. Recently, Lenstra, Schoof, and Silverberg presented an efficient algorithm that can decide if a given lattice is isomorphic to $\Z^n$, assuming some information about its symmetries is provided as a hint~\cite{LenstraSS13}.

Deciding whether two given combinatorial or algebraic structures are isomorphic is a notorious question in the theory of computing. A well-known special case of this problem is the Graph Isomorphism Problem ($\GIP$), in which given two graphs $G_1$ and $G_2$ one has to decide whether there exists an edge-preserving bijection from the vertex set of $G_1$ to that of $G_2$. The best known worst-case running time of an algorithm for $\GIP$ is $2^{\widetilde{O}(\sqrt{n})}$, where $n$ stands for the number of vertices~\cite{BabaiIso}. It was shown in~\cite{GoldreichMW91} that $\GIP$ lies in the complexity class $\coAM$. This implies that, unless the polynomial time hierarchy collapses, $\GIP$ is not $\NP$-hard, and it is a long-standing open question whether there exists a polynomial time algorithm solving it (see, e.g.,~\cite{BabaiSurvey}). Interestingly, it was shown in~\cite{LatticeIso} that the isomorphism problem on lattices is at least as hard as that on graphs.

Another isomorphism problem of interest is the Code Equivalence Problem, in which given two $n$-dimensional linear codes $\calC_1$ and $\calC_2$ over some field $\Fset$ the goal is to decide whether there exists a permutation on the coordinates mapping $\calC_1$ to $\calC_2$. This problem was studied by Petrank and Roth~\cite{PetrankR97}, who showed that it lies in $\coAM$ and is at least as hard as $\GIP$. Recently, Babai showed an algorithm solving it in time $(2+o(1))^n$ (see~\cite[Appendix~7.1]{BabaiCGQ11}).

\subsection{Our Results}

Our main result is an algorithm that given two lattices computes {\em all} orthogonal linear transformations mapping one lattice to another and, in particular, decides $\LIP$.

\begin{theorem}\label{thm:MainAlgIntro}
There exists an algorithm that given two bases of lattices $\L_1$ and $\L_2$ of rank $n$, outputs all orthogonal linear transformations $O: \linspan(\L_1) \rightarrow \linspan(\L_2)$ for which $\L_2 = O(\L_1)$ in running time $n^{O(n)} \cdot s^{O(1)}$ and in polynomial space, where $s$ denotes the input size. In addition, the number of these transformations is at most $n^{O(n)}$.
\end{theorem}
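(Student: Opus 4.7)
The plan is to reduce the enumeration of orthogonal transformations to the enumeration of ordered bases of $\L_2$ having a prescribed Gram matrix, and then to carry out that enumeration by a depth-first backtracking procedure that uses short-vector enumeration in a lattice as a subroutine.

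The first step is the reduction. Fix any basis $b_1,\ldots,b_n$ of $\L_1$, and let $G$ be its Gram matrix $G_{ij}=\langle b_i,b_j\rangle$. Orthogonal maps $O\colon \linspan(\L_1)\to \linspan(\L_2)$ with $O(\L_1)=\L_2$ correspond bijectively to ordered bases $(v_1,\ldots,v_n)$ of $\L_2$ whose Gram matrix equals $G$: from such a basis one recovers $O$ as the unique linear extension of $b_i\mapsto v_i$, which is automatically orthogonal because it preserves all inner products on a basis of $\linspan(\L_1)$. So the task reduces to enumerating the ordered bases of $\L_2$ with Gram matrix $G$.

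Next, to keep the enumeration efficient I would replace $b_1,\ldots,b_n$ by a basis for which $\|b_i\|\le n^{O(1)}\lambda_i(\L_1)$, e.g., a Hermite--Korkine--Zolotarev reduced basis, which can be computed in $n^{O(n)}$ time via Kannan's algorithm. The enumeration then proceeds depth-first: at depth $i$, given a partial tuple $(v_1,\ldots,v_{i-1})$ already consistent with $G$, the linear constraints $\langle v_i,v_j\rangle=G_{ij}$ for $j<i$ pin down the projection $w$ of $v_i$ onto $\linspan(v_1,\ldots,v_{i-1})$, so $v_i=w+u$ with $u\in\linspan(v_1,\ldots,v_{i-1})^\perp$ and $\|u\|^2=G_{ii}-\|w\|^2$. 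The admissible $u$'s are the lattice points in a specific coset of the projected lattice $\pi(\L_2)$ at a prescribed norm, a finite set that can be enumerated by a Kannan-style short-vector routine in time $n^{O(n)}\cdot\poly$. After depth $n$, I would keep only those tuples generating all of $\L_2$ (tested by the single index computation $[\L_2:\langle v_1,\ldots,v_n\rangle]=1$) and output for each the corresponding $O$.

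The main obstacle is bounding the total size of the search tree, and hence both the running time and the number of transformations, by $n^{O(n)}$. The leverage comes from the short-basis choice: because $\|b_i\|$ is comparable to $\lambda_i(\L_1)=\lambda_i(\L_2)$ (or the lattices are not isomorphic and no tuple survives), each $v_i$ has norm $n^{O(1)}\lambda_i(\L_2)$, and at depth $i$ the number of extensions is controlled by a Minkowski/packing bound applied to the rank-$(n-i+1)$ projected lattice in the direction orthogonal to $v_1,\ldots,v_{i-1}$, whose first successive minimum can be estimated via the Gram--Schmidt length of $b_i$. Showing that the resulting per-level bounds multiply to $n^{O(n)}$ is the delicate part; polynomial space is automatic from the depth-first structure, and the final bound on the number of orthogonal transformations follows since each corresponds to a distinct surviving leaf of the enumeration tree.
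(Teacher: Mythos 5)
Your reduction to enumerating ordered tuples in $\L_2$ whose Gram matrix matches that of a (HKZ-reduced) basis of $\L_1$ is sound, and the per-node step (the admissible $v_i$ form the points of prescribed norm in a coset --- in fact a coset of $\L_2\cap\linspan(v_1,\ldots,v_{i-1})^\perp$, not of the projected lattice, though that is minor). The genuine gap is exactly at the step you flag as delicate: bounding the size of the search tree by $n^{O(n)}$. Your stated leverage --- that the first minimum of the relevant complement lattice at depth $i$ ``can be estimated via the Gram--Schmidt length of $b_i$'' --- is only valid at nodes whose prefix $(v_1,\ldots,v_{i-1})$ is the image of $(b_1,\ldots,b_{i-1})$ under an actual isomorphism: then $\L_2\cap\linspan(v_1,\ldots,v_{i-1})^\perp$ is isometric to $\L_1\cap\linspan(b_1,\ldots,b_{i-1})^\perp$, whose first minimum is at least $\|\tilde{b}_i\|$. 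For a prefix that is merely Gram-consistent but not extendable, the complement lattice can contain vectors far shorter than $\|\tilde{b}_i\|$, and the packing bound gives you nothing better than $(1+2\|\tilde{b}_i\|/\lambda_1(\L_2))^{\,n-i+1}$, with $\|\tilde b_i\|/\lambda_1$ unbounded in terms of $n$. Moreover, even granting an optimistic per-node branching factor of $2^{O(n-i)}$ (which the packing bound permits and which lattices with exponential kissing number threaten to realize at $\Omega(n)$ levels), the product over levels is $2^{O(n^2)}$, not $n^{O(n)}$; to get $n^{O(n)}$ you would need essentially polynomial branching per level, and no argument is offered for that. This is precisely the obstruction the paper points out for the naive ``map $n$ short vectors to all tuples of short vectors'' algorithm, whose worst case is $2^{O(n^2)}$ (and still $n^{\Omega(n\log n)}$ even under optimistic kissing-number assumptions); your Gram-consistency pruning does not come with any bound showing it avoids this.

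The paper circumvents exactly this multiplication across levels by a different mechanism: a generalized isolation lemma produces a single short vector $v\in\L_1^*$ (of norm at most $5n^{17/2}\lambda_1(\L_1^*)$) that \emph{uniquely} determines a chain of $n$ linearly independent shortest vectors of $\L_1$; the algorithm then enumerates only the $n^{O(n)}$ candidates in $\L_2^*$ for the image of $v$, each of which determines the image chain uniquely, so there is no tuple-by-tuple branching at all. The general case is handled not by HKZ-ordering one basis but by splitting off the span of the shortest vectors (where $\lambda_1=\lambda_k$ holds), recursing on the projection to its orthogonal complement, and gluing the two restrictions (Lemma~\ref{lemma:general}). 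To salvage your approach you would need a new argument bounding the number of Gram-consistent partial tuples --- which is not known and is the heart of the matter --- or an isolation-type device like the paper's; note also that your bound on the number of isomorphisms is circular as written, since it is deduced from the very tree-size bound that is missing.
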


\noindent
We note that the bound in Theorem~\ref{thm:MainAlgIntro} on the number of orthogonal linear transformations mapping one lattice to another is tight up to the constant in the exponent. To see this, observe that the isomorphisms from the lattice $\Z^n$ to itself are precisely all the $2^n \cdot n! = n^{\Omega(n)}$ sign permutations. This implies that the running time of the algorithm from Theorem~\ref{thm:MainAlgIntro} is optimal, up to the constant in the exponent, given that it outputs all isomorphisms between the two input lattices. However, the challenge of finding a more efficient algorithm which only decides $\LIP$ is left open.

The algorithm from Theorem~\ref{thm:MainAlgIntro} is crucially based on a new version of the celebrated {\em isolation lemma} of Valiant and Vazirani~\cite{ValiantV86}. A standard version of the lemma says that for every set $C \subseteq \Z^n$ of short vectors (in $\ell_\infty$ norm), most short integer vectors $z$ have a {\em single} vector in $C$ that minimizes the inner product with $z$ over all vectors in $C$. The isolation lemma has appeared in the literature in several variations for various applications, ranging from the design of randomized algorithms, e.g.,~\cite{MulmuleyVV87,NarayananSV94,ChariRS95,KlivansS01}, to results in computational complexity, e.g.,~\cite{Toda91,Wigderson94,ReinhardtA00,ArvindM08} (for a survey see~\cite{IsolationSurvey}). Whereas the lemma is usually used to isolate one vector, for our application we need to isolate $n$ linearly independent vectors in $C$. The lemma below guarantees the existence of a vector $z$ and a sequence of $n$ linearly independent vectors in $C$, each of which uniquely minimizes the inner product with $z$ over all vectors in $C$ which are not in the linear span of the previous ones.

\begin{lemma}\label{lemma:isolation_spanIntro}
Let $C \subseteq \Z^n$ be a set of vectors satisfying $\|c\|_\infty \leq K$ for every $c \in C$ and $\linspan(C) = \R^n$. Let $z=(z_1,\ldots,z_n)$ be a random vector such that each $z_i$ is independently chosen from the uniform distribution over $\{1,\ldots,R\}$ for $R = K(2K+1)n^3/\eps$. Then, with probability at least $1-\eps$, there are $n$ linearly independent vectors $x_1,\ldots,x_n \in C$ such that for every $1 \leq j \leq n$, the minimum inner product of $z$ with vectors in $C \setminus \linspan(x_1,\ldots,x_{j-1})$ is uniquely achieved by $x_j$.
\end{lemma}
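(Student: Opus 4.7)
The plan is to fix a deterministic tie-breaking rule so that the greedy process induces well-defined functions $x_1(z),\ldots,x_n(z)$ and nested subspaces $V_j(z) = \linspan(x_1(z),\ldots,x_j(z))$. The conclusion fails precisely when, at some step $1\le j\le n$, the minimum of $\langle z,c\rangle$ over $c\in C\setminus V_{j-1}(z)$ is attained by more than one vector, so it suffices to union-bound the probability of failure at a single step over the $n$ steps.

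A main ingredient will be a vector-valued isolation lemma: for any fixed $C'\subseteq\{-K,\ldots,K\}^n$, with $z$ uniform in $\{1,\ldots,R\}^n$, the minimum of $\langle z,c\rangle$ over $c\in C'$ is uniquely achieved with probability at least $1-2Kn/R$. I would prove this by induction on $n$. Fixing $z_{-1}$, the overall minimum equals $\min_v(z_1 v + \mu_v(z_{-1}))$, where $\mu_v(z_{-1}) = \min\{\sum_{j>1}z_j c_j : c\in C',\ c_1 = v\}$; this is the lower envelope of at most $2K+1$ affine functions of $z_1$, so it has at most $2K$ breakpoints. Off these breakpoints the optimal first-coordinate value $v^\ast$ is uniquely determined (probability $\ge 1-2K/R$), and the remaining uniqueness question is an $(n-1)$-dimensional instance on the slice $\{c\in C':c_1 = v^\ast\}\subseteq\{-K,\ldots,K\}^{n-1}$, to which the inductive hypothesis applies.

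The main obstacle is that at step $j$ the set $C\setminus V_{j-1}(z)$ is itself a function of $z$, so the auxiliary lemma cannot be applied as a black box on a fixed set. To circumvent this I would, for each coordinate $i\in[n]$ separately, condition on an arbitrary fixed value of $z_{-i}$ and directly count the number of values of $z_i\in\{1,\ldots,R\}$ that induce ambiguity at some step of the greedy process. As $z_i$ varies with $z_{-i}$ held fixed, each inner product $\langle z,c\rangle$ is an affine function of $z_i$ whose slope $c_i$ lies in $\{-K,\ldots,K\}$, and each greedy pick $x_j(z)$ is a piecewise-constant function of $z_i$; an ambiguity at step $j$ in coordinate $i$ occurs exactly at breakpoints where two lines with different slopes (hence different $i$-th coordinates) cross within the relevant lower envelope over $C\setminus V_{j-1}$. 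By bounding the number of such breakpoints per step and per greedy prefix using the $2K+1$-slope structure, and summing over the at most $n$ steps, I expect to obtain at most $K(2K+1)n^2$ bad values of $z_i$ per coordinate; a union bound over the $n$ coordinates then gives a failure probability of at most $K(2K+1)n^3/R=\eps$, matching the choice of $R$.

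The most delicate step will be controlling how many distinct greedy prefixes $(x_1(z),\ldots,x_{j-1}(z))$ can arise as $z_i$ varies: a naive bound on the propagation of breakpoints through the $n$ greedy steps is exponential in $n$, so the argument must exploit the limited slope structure of the inner-product lines together with the matroidal monotonicity of the greedy process (once a vector leaves the top of the envelope it does not return) to keep the total number of relevant breakpoints at a polynomial level. Carrying this bookkeeping out carefully is where the $K(2K+1)n^3$ factor in the statement arises.
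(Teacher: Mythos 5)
Your frame---analyzing failure coordinate-by-coordinate, fixing $z_{-i}$, and counting bad values of $z_i$ coming from crossings of inner-product lines with distinct slopes---is the right one, and your target of at most $K(2K+1)n^2$ bad values per coordinate is exactly what the paper's proof delivers. But the proposal stops precisely where the real work lies. You correctly observe that the greedy prefix $(x_1(z),\ldots,x_{j-1}(z))$, and hence the lower envelope you want to analyze at step $j$, itself depends on $z_i$, and that a naive bound on how breakpoints propagate through the $n$ steps is exponential; you then only conjecture that ``matroidal monotonicity'' (a vector that leaves the top of the envelope never returns) will keep the count polynomial. That monotonicity is a statement about the evolution in $j$ for a \emph{fixed} $z$; it is neither proved nor clearly usable as a statement about how the selected prefixes vary with $z_i$, so the decisive counting step is missing and the proof is not complete as sketched.

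The paper closes this gap by never tracking the prefix as a function of $z_i$ at all. Fix $z_{-i}$ and slice $C$ by the $i$th coordinate, $C^{(t)}=\{c\in C \mid c_i=t\}$ for $t\in\{-K,\ldots,K\}$. Within a slice, all inner products shift by the same amount $z_i t$ as $z_i$ varies, so the partition of $C^{(t)}$ into level sets of $\langle z,\cdot\rangle$, their relative order, and which level sets are ``contributing'' (not contained in $\linspan$ of the union of the lower level sets of the \emph{same} slice) are all independent of $z_i$; moreover, choosing one vector from each contributing level set gives linearly independent vectors, so each slice has at most $n$ contributing level sets. The key point is that if for some $z_i$ two vectors $b,c$ with $b_i\neq c_i$ tie at some step of the greedy process---whatever the prefix happens to be at that $z_i$---then the common minimal value is attained by a contributing level set of $C^{(b_i)}$ and, simultaneously, by one of $C^{(c_i)}$. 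Viewed as functions of $z_i$, the values of two contributing level sets from distinct slices are affine with distinct slopes $t\neq t'$, so each of the at most $n^2\binom{2K+1}{2}\le K(2K+1)n^2$ pairs coincides for at most one value of $z_i$. This yields the per-coordinate bound with no bookkeeping whatsoever of how many prefixes can arise, which is exactly the bookkeeping your plan leaves open. (Your auxiliary fixed-set isolation lemma is correct but, as you yourself note, cannot be applied to the $z$-dependent sets, and the paper's argument does not need it.) To complete your proof you would need to supply this slice-wise ``contributing level set'' argument or a genuine substitute for it.
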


\noindent
We actually prove this in a more general setting, in which $\linspan$ can be replaced by any function satisfying some condition. This more general statement includes as special cases some of the previously known variations of the isolation lemma, and might be useful elsewhere. See Section~\ref{sec:isolation} for details.

Finally, we prove that $\LIP$, which naturally lies in $\NP$, has a statistical zero-knowledge proof system and hence belongs to the complexity class $\SZK$. This result was independently observed by Greg Kuperberg~\cite{Kup}.

\begin{theorem}\label{thm:coAMIntro}
$\LIP$ is in $\SZK$.
\end{theorem}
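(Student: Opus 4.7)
The plan is to construct an honest-verifier statistical zero-knowledge proof system for $\LIP$; membership in $\SZK$ then follows from the theorem $\HVSZK = \SZK$ of Goldreich, Sahai, and Vadhan. It is convenient to work with Gram matrices: $\L_1$ and $\L_2$ are isomorphic iff their Gram matrices $G_i = B_i^T B_i$ satisfy $G_2 = V^T G_1 V$ for some $V \in GL_n(\Z)$. The protocol parallels the Goldreich--Micali--Wigderson zero-knowledge proof for graph isomorphism. On input $(\L_1, \L_2)$, the prover picks $c \in \{1,2\}$ uniformly and samples a unimodular $U$ from an efficiently sampleable distribution $\mu$ on $GL_n(\Z)$; it sends $G = U^T G_c U$. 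The verifier replies with a uniform challenge $b \in \{1,2\}$; the prover reveals $U' \in GL_n(\Z)$ satisfying $G = (U')^T G_b U'$, and the verifier accepts iff this equation holds.

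Completeness follows in the YES case because, given a witness $V$ with $G_2 = V^T G_1 V$, the prover answers $U$ when $c = b$, $VU$ when $(c,b) = (2,1)$, and $V^{-1} U$ when $(c,b) = (1,2)$. Soundness follows because, in the NO case, $G_1$ and $G_2$ lie in distinct $GL_n(\Z)$-equivalence classes, so for any $G$ the verifier's check $G = (U')^T G_b U'$ can hold for at most one $b \in \{1,2\}$; averaging over the challenge yields rejection probability at least $1/2$ per round, amplified by repetition. For honest-verifier zero-knowledge, the simulator picks $b$, samples $U'$ from $\mu$, and outputs the transcript $((U')^T G_b U',\, b,\, U')$. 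A change of variables shows that in the YES case this matches the honest transcript exactly when $c = b$, and differs from it only by the pushforward of $\mu$ under left multiplication by $V^{\pm 1}$ when $c \neq b$. Zero-knowledge thus reduces to $\mu$ being approximately invariant under left multiplication by $V$ and $V^{-1}$.

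The main obstacle is constructing such a $\mu$ so that both sampling and the invariance analysis are tractable. A natural candidate is the distribution obtained by a polynomial-length random walk on $GL_n(\Z)$ generated by elementary row operations drawn from a bounded-support step distribution; approximate invariance under a fixed $V$ of polynomial bit-length then follows once the walk runs past its mixing time. Bounding this mixing time and the resulting statistical distance between the prover and simulator transcripts by $1/\poly(n, s)$ is the technical heart of the argument. Equivalently, the same ingredients package into a reduction from $\LIP$ to the $\SZK$-complete Statistical Difference problem via the distributions $D_i$ of $U^T G_i U$ for $U \sim \mu$: these are statistically close in the YES case and disjointly supported in the NO case, since $U^T G_i U$ determines the $GL_n(\Z)$-equivalence class of $G_i$.
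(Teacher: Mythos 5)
Your protocol's completeness and soundness are fine; the gap is exactly at the point you yourself identify as the technical heart, and it is not a missing calculation but a structural impossibility. For honest-verifier zero-knowledge you need an efficiently sampleable $\mu$ on $GL_n(\Z)$ with statistical distance $\mathrm{SD}(\mu,V\mu)\le 1/\poly$ for the witness $V$ of each YES instance. Taking $G_1$ generic (automorphism group $\{\pm I\}$) and $G_2=V^T G_1 V$, the witness is $\pm V$, so your input-independent $\mu$ would have to be nearly invariant under left multiplication by \emph{every} unimodular $V$ of bounded bit length, in particular by all elementary transvections. No such probability measure exists: the existence, for every $\epsilon>0$, of a measure that is $\epsilon$-almost-invariant in $\ell_1$ under a finite generating set is Reiter's condition, which characterizes amenability, and $\mathrm{SL}_n(\Z)$ is non-amenable for $n\ge 2$ (it contains free subgroups); hence there is a constant $c_n>0$ such that \emph{any} distribution on $GL_n(\Z)$ is moved by at least $c_n$ in statistical distance by some elementary matrix. (Folding through the center to absorb the $\pm I$ automorphisms does not help, since $\mathrm{PSL}_n(\Z)$ is still non-amenable.) The random-walk heuristic also fails on its own terms: an infinite discrete group has no stationary probability measure, so ``mixing time'' is undefined; on a non-amenable group the walk has spectral radius strictly below $1$ and $\mu_T$ and $V\mu_T$ stay boundedly far apart; and even in commutative directions, witnesses $V$ can have entries exponential in the input size, beyond the reach of polynomial-length walks. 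Tailoring $\mu$ to the witness is not an option either, since the simulator cannot extract $V$ from the input. The same obstruction defeats your ``equivalently'' packaged reduction to Statistical Difference: the closeness of $U^T G_1 U$ and $U^T G_2 U$ in the YES case is precisely the invariance you cannot have.

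The paper sidesteps this by never randomizing over basis changes. It gives an $\HVSZK$ protocol for the \emph{complement} of $\LIP$ (and invokes closure of $\SZK$ under complement): the verifier picks $i\in\{1,2\}$, samples polynomially many vectors from the discrete Gaussian $D_{\L(B_i),s}$ --- a distribution determined by the lattice alone, not by the basis, so the induced distribution on Gram matrices is \emph{exactly} the same for isomorphic lattices --- and sends the Gram matrix of the samples; Lemma~\ref{lemma:SampleGenSet} guarantees the samples generate the lattice with high probability, so in the non-isomorphic case an unbounded prover can identify $i$, while in the isomorphic case no prover succeeds with probability above $1/2$, and the simulator is trivial. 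The exact isomorphism-invariance of the discrete Gaussian is what plays the role of the invariant distribution on $GL_n(\Z)$ that your approach would need; any repair of a direct GMW-style protocol for $\LIP$ would likewise have to replace ``random unimodular $U$'' by such a canonical, basis-independent randomization (e.g., Gram matrices of Gaussian-sampled generating sets).
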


\noindent
It is well known that $\SZK \subseteq \AM \cap \coAM$~\cite{Fortnow89,AielloH91}. As a result, just like many other lattice problems (e.g., the problem of approximating the length of a shortest nonzero vector to within polynomial factors, which is central in lattice-based cryptography), $\LIP$ is unlikely to be $\NP$-hard. We note, though, that the reduction from the Graph Isomorphism Problem ($\GIP$)~\cite{LatticeIso} gives some evidence that $\LIP$ is a hard problem, evidence that is lacking for other lattice problems.

\subsection{Overview of Proofs and Techniques}

\subsubsection{The Algorithm for \texorpdfstring{$\LIP$}{LIP}}

The input of $\LIP$ consists of two lattices $\L_1$ and $\L_2$ of rank $n$, and the goal is to decide if there exists an orthogonal linear transformation $O$ satisfying $\L_2=O(\L_1)$. In order to find such an $O$ it suffices to find $n$ linearly independent vectors in $\L_1$ and their image in $\L_2$ according to $O$. Since $O$ preserves lengths, a possible approach is to compute $n$ linearly independent short vectors of $\L_1$ and try to map them to all $n$-tuples of short vectors of $\L_2$.

Consider the case where the lattices $\L_1$ and $\L_2$ have only one shortest nonzero vector (up to sign). In this case, there are only two possible choices for how an isomorphism from $\L_1$ to $\L_2$ can act on these vectors. Hence, one can recursively solve the problem by considering the lattices $\L_1$ and $\L_2$ projected to the spaces orthogonal to their shortest vectors. This demonstrates that the hard instances of the problem are those where the lattices have $n$ linearly independent shortest vectors, so in the rest of this discussion let us assume that we are in this case.

Given the lattices $\L_1$ and $\L_2$ it is possible to compute the sets $A_1$ and $A_2$ of all shortest nonzero vectors in $\L_1$ and $\L_2$ respectively. Indeed, by the algorithm of~\cite{MicV10}, the running time needed to compute $A_1$ and $A_2$ is $2^{O(n)}$ (or $n^{O(n)}$, if we insist on polynomial space complexity~\cite{Kannan87}). Now, consider the algorithm that for certain $n$ linearly independent vectors in $A_1$ tries all the linear transformations that map them to $n$ linearly independent vectors in $A_2$ and checks if at least one of them is orthogonal and maps $\L_1$ to $\L_2$. Notice that the running time of this algorithm crucially depends on the number of shortest nonzero vectors in the lattices, which is usually referred to as their {\em kissing number}. It is easy to see that the kissing number of a lattice of rank $n$ is at most $2^{n+1}$.\footnote{Indeed, if there are more than $2^{n+1}$ shortest nonzero lattice vectors then at least two of them belong to the same coset of $2\L$ and have nonzero average, hence their average is a shorter lattice vector, in contradiction.} This implies that the suggested algorithm has running time whose dependence on $n$ is bounded by $2^{O(n^2)}$. However, the true worst-case running time of this algorithm is a function of the maximum possible kissing number of a lattice of rank $n$, whose value is an open question. The best currently known lower bound is $n^{\Omega(\log n)}$~\cite{BS83} (see also~\cite[Page~151]{ConwaySloane}), hence even if this lower bound were tight (which does not seem particularly likely), the algorithm would run in time $n^{\Omega(n\log n)}$, which is still asymptotically slower than our algorithm.

We improve on the above naive algorithm by showing a way to isolate $n$ linearly independent vectors in the sets $A_1$ and $A_2$. In Theorem~\ref{thm:uniquelyL} we prove, using our isolation lemma (Lemma~\ref{lemma:isolation_spanIntro}), that for a lattice $\L_1$ as above there exists a relatively short vector $v$ in the dual lattice $\L_1^*$ that {\em uniquely} defines $n$ linearly independent vectors $x_1,\ldots,x_n$ in $A_1$. These vectors are defined as follows: for every $1 \leq j \leq n$, the minimum inner product of $v$ with vectors in $A_1 \setminus \linspan(x_1,\ldots,x_{j-1})$ is uniquely achieved by $x_j$. Given such a $v$ (which can be found by enumerating all short vectors in $\L_1^*$), we try all vectors of norm $\|v\|$ in $\L_2^*$, of which there are at most $n^{O(n)}$.
Once we find the image of $v$ under $O$, we use it to apply the same process as above with $A_2$ obtaining $n$ linearly independent vectors in $A_2$. Since $O$ preserves inner products, these vectors must be the images of $x_1,\ldots,x_n$ under $O$, which allows us to find $O$.

\subsubsection{\texorpdfstring{$\LIP$}{LIP} is in \texorpdfstring{$\SZK$}{SZK}}

We turn to discuss the proof of Theorem~\ref{thm:coAMIntro} which says that $\LIP$ lies in the complexity class $\SZK$. Since $\SZK$ is known to be closed under complement~\cite{Okamoto00}, it suffices to show a statistical zero-knowledge proof system that enables an efficient verifier to verify that two given lattices are {\em not} isomorphic. The high level idea is similar to known proof systems of the complement of other isomorphism problems, e.g., Graph Isomorphism and Code Equivalence~\cite{ValiantV86,PetrankR97}. In these proof systems the verifier picks uniformly at random one of the two objects given as input and sends to the prover a random representative of its isomorphism class. The verifier accepts if and only if the prover identifies which of the two objects was chosen. A crucial observation is that if the two objects are isomorphic then the prover gets a sample which is independent of the chosen object, hence she is not able to identify it with probability higher than $1/2$. On the other hand, if the objects are not isomorphic, then the object sent by the verifier is isomorphic to exactly one of the two, hence a computationally unbounded prover is able to answer correctly. Moreover, the correct answer is known in advance to the verifier, who therefore does not learn anything new from the prover's answer.

In the lattice analogue of the above proof system, in addition to choosing a lattice that forms a representative of the isomorphism class, the verifier also has to choose a basis that generates the lattice. Observe that the basis should be chosen in a way that does not provide any useful information for the prover, and in particular, must not depend on the input bases. To deal with this difficulty we use known efficient algorithms to sample lattice vectors from the discrete Gaussian distribution~\cite{GentryPV08} (see also~\cite{BrakerskiLPRS13}), and prove that polynomially many samples suffice to obtain a generating set for the lattice (Lemma~\ref{lemma:SampleGenSet}). We can then send a random rotation of this set of vectors. In fact, to avoid issues of accuracy, we instead send the matrix of all pairwise inner products of these vectors (the Gram matrix).

\subsection{Outline}
The rest of the paper is organized as follows. In Section~\ref{sec:pre} we gather basic definitions and results we shall later use. In Section~\ref{sec:isolation} we prove our new generalized isolation lemma and derive Lemma~\ref{lemma:isolation_spanIntro}. In Section~\ref{sec:algorithm} we present our algorithm for $\LIP$ and prove Theorem~\ref{thm:MainAlgIntro}. This is done in two steps, where in the first we assume that the input lattices contain $n$ linearly independent shortest vectors (i.e., $\lambda_1=\lambda_n$), and in the second we extend the algorithm to the general case. Finally, in Section~\ref{sec:coAM} we prove Theorem~\ref{thm:coAMIntro}.

\section{Preliminaries}\label{sec:pre}

\subsection{General}
An {\em orthogonal} linear transformation (or {\em isometry}) $O : V_1 \rightarrow V_2$ is a linear transformation that preserves inner products, that is, $\langle x, y\rangle = \langle O(x), O(y)\rangle$ for every $x,y \in V_1$. For a set $A \subseteq V_1$ we use the notation $O(A) = \{ O(x) \mid x \in A\}$. For a matrix $B$ we denote its $i$th column by $b_i$, and $O(B)$ stands for the matrix whose $i$th column is $O(b_i)$. The {\em Gram matrix} of a matrix $B$ is defined to be the matrix $G = B^T \cdot B$, or equivalently, $G_{i,j} = \langle b_i , b_j \rangle$ for every $i$ and $j$. The Gram matrix of a matrix $B$ determines its columns up to an orthogonal linear transformation, as stated below. Note that $\linspan(B)$ stands for the subspace spanned by the columns of $B$.

\begin{fact}\label{fact:Gram}
Let $B$ and $D$ be two matrices satisfying $B^T \cdot B = D^T \cdot D$. Then there exists an orthogonal linear transformation $O:\linspan(B) \rightarrow \linspan(D)$ for which $D = O(B)$.
\end{fact}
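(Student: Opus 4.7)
The plan is to define $O$ on the columns of $B$ by $O(b_i) = d_i$ and then extend linearly. Concretely, any $x \in \linspan(B)$ can be written as $x = \sum_i \alpha_i b_i$ for some (not necessarily unique) coefficients $\alpha_i$, and we set $O(x) = \sum_i \alpha_i d_i$. Before anything else we have to justify that this recipe produces a well-defined map, which is precisely where the Gram matrix hypothesis enters.

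First I would prove well-definedness. Suppose $\sum_i \alpha_i b_i = 0$; I want to conclude that $\sum_i \alpha_i d_i = 0$ as well. The trick is to compute the squared norm directly using the Gram matrix:
\[
\Bigl\| \sum_i \alpha_i d_i \Bigr\|^2 = \sum_{i,j} \alpha_i \alpha_j \langle d_i, d_j \rangle = \sum_{i,j} \alpha_i \alpha_j (D^T D)_{i,j} = \sum_{i,j} \alpha_i \alpha_j (B^T B)_{i,j} = \Bigl\| \sum_i \alpha_i b_i \Bigr\|^2 = 0.
\]
Hence $\sum_i \alpha_i d_i = 0$, so $O$ is well-defined on $\linspan(B)$ (and linear by construction). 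This is the only delicate point; everything after it is essentially the same calculation applied more generally.

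Next I would verify that $O$ preserves inner products. For $x = \sum_i \alpha_i b_i$ and $y = \sum_j \beta_j b_j$ in $\linspan(B)$, expanding and using $B^T B = D^T D$ entry by entry gives
\[
\langle O(x), O(y) \rangle = \sum_{i,j} \alpha_i \beta_j (D^T D)_{i,j} = \sum_{i,j} \alpha_i \beta_j (B^T B)_{i,j} = \langle x, y \rangle.
\]
So $O$ is an isometry onto its image, and in particular injective.

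Finally I would check that the codomain really is $\linspan(D)$. By construction the image contains every $d_i$, so $\linspan(D) \subseteq O(\linspan(B))$. For the reverse inclusion, note that $\rank(B) = \rank(B^T B) = \rank(D^T D) = \rank(D)$, so $\dim \linspan(B) = \dim \linspan(D)$; since $O$ is injective, its image has the same dimension as $\linspan(B)$, and the inclusion $\linspan(D) \subseteq O(\linspan(B))$ must be equality. The main (and really only) obstacle is the well-definedness step, which is exactly what the Gram matrix equality is designed to supply; everything else is routine.
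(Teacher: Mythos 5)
Your proof is correct. The paper actually states this fact without proof (it is standard linear algebra about Gram matrices), and your argument---well-definedness of $O(b_i)=d_i$ via the norm computation $\|\sum_i \alpha_i d_i\|^2=\alpha^T D^TD\,\alpha=\alpha^T B^TB\,\alpha$, followed by the same expansion for inner products---is exactly the standard one. One small simplification: the rank/injectivity argument at the end is unnecessary, since $O(\linspan(B))\subseteq\linspan(D)$ is immediate from the formula $O(x)=\sum_i\alpha_i d_i$ and $O(\linspan(B))\supseteq\linspan(D)$ follows from $O(b_i)=d_i$, giving surjectivity directly.
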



\subsection{Lattices}

An $m$-dimensional {\em lattice} $\L \subseteq \R^m$ is the set of all integer combinations of a set of linearly independent vectors $\{b_1,\ldots,b_n\} \subseteq \R^m $, i.e., $\L=\{\sum_{i=1}^{n}{a_i b_i}~|~\forall i.~a_i \in \Z\}$. The set $\{b_1,\ldots,b_n\}$ is called a {\em basis} of $\L$ and $n$, the number of vectors in it, is the {\em rank} of $\L$. Let $B$ be the $m$ by $n$ matrix whose $i$th column is $b_i$. We identify the matrix and the basis that it represents and denote by $\L(B)$ the lattice that $B$ generates. The norm of a basis $B$ is defined by $\|B\| = \max_{i}{\|b_i\|}$.
A basis of a lattice is not unique. It is well known that two bases $B_1$ and $B_2$ generate the same lattice of rank $n$ if and only if $B_1 = B_2 \cdot U$ for a {\em unimodular} matrix $U \in \Z^{n \times n}$, i.e., an integer matrix satisfying $|\det(U)|=1$. The determinant of a lattice $\L$ is defined by $\det(\L)=\sqrt{\det(B^T B)}$, where $B$ is a basis that generates $\L$. It is not difficult to verify that $\det(\L)$ is independent of the choice of the basis. A set of (not necessarily linearly independent) vectors that generate a lattice is called a {\em generating set} of the lattice. A lattice $\calM$ is a {\em sublattice} of a lattice $\L$ if $\calM \subseteq \L$, and it is a {\em strict sublattice} if $\calM \subsetneq \L$. If a lattice $\L$ and its sublattice $\calM$ span the same subspace, then the {\em index} of $\calM$ in $\L$ is defined by $|\L : \calM| = \det(\calM)/\det(\L)$. It is easy to see that if $\calM$ is a sublattice of $\L$ such that $|\L : \calM|=1$ then $\calM=\L$.

The length of a shortest nonzero vector in $\L$ is denoted by $\lambda_1(\L) = \min\{\|u\|~|~u \in \L \setminus \{0\}\}$. The following simple and standard fact provides an upper bound on the number of short vectors in a lattice of rank $n$ (see, e.g.,~\cite{MicV10}).

\begin{fact}\label{fact:short}
For every lattice $\L$ of rank $n$ and for every $t \geq 0$, the number of vectors in $\L$ of norm at most $t \cdot \lambda_1(\L)$ is at most $(2t+1)^n$.
\end{fact}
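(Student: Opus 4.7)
The plan is to prove this by a standard volume-packing argument in the $n$-dimensional subspace $\linspan(\L) \subseteq \R^m$. Set $\lambda = \lambda_1(\L)$, and let $S = \{v \in \L : \|v\| \leq t\lambda\}$ be the set whose cardinality we want to bound.

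First I would observe that the open Euclidean balls $B(v, \lambda/2) = \{x \in \linspan(\L) : \|x-v\| < \lambda/2\}$ of radius $\lambda/2$ around distinct lattice points $v$ are pairwise disjoint. Indeed, if $x$ lies in $B(v_1,\lambda/2) \cap B(v_2,\lambda/2)$ for distinct $v_1,v_2 \in \L$, then by the triangle inequality $\|v_1 - v_2\| < \lambda$, contradicting the definition of $\lambda_1(\L)$. Note that $v_1 - v_2$ is a nonzero lattice vector.

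Next I would observe a containment: for every $v \in S$, the ball $B(v,\lambda/2)$ is contained in the ball $B(0, (2t+1)\lambda/2)$ centered at the origin, since any point in $B(v,\lambda/2)$ has norm less than $\|v\| + \lambda/2 \leq t\lambda + \lambda/2 = (2t+1)\lambda/2$.

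Finally, comparing $n$-dimensional volumes inside $\linspan(\L)$ and letting $V_n$ denote the volume of the unit ball in $\R^n$, the disjointness from the first step together with the containment from the second gives
\[
|S| \cdot V_n \cdot (\lambda/2)^n \;\leq\; V_n \cdot ((2t+1)\lambda/2)^n,
\]
and dividing both sides by $V_n (\lambda/2)^n$ yields $|S| \leq (2t+1)^n$, as required. There is no real obstacle here; the only mild subtlety is to carry out the volume comparison inside the $n$-dimensional space $\linspan(\L)$ rather than in the ambient $\R^m$, which is legitimate because both $\L$ and the balls involved lie within this $n$-dimensional subspace.
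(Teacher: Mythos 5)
Your proof is correct and follows essentially the same packing argument as the paper: disjoint open balls of radius $\lambda_1(\L)/2$ around the short lattice vectors, contained in the ball of radius $(t+1/2)\lambda_1(\L)$, and a volume comparison giving $(2t+1)^n$. Your explicit remark about carrying out the volume comparison inside $\linspan(\L)$ is a fine clarification of a point the paper leaves implicit.
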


\begin{proof}
Consider all the (open) balls of radius $\lambda_1(\L)/2$ centered at the lattice points of distance at most $t \cdot \lambda_1(\L)$ from the origin. These balls are pairwise disjoint and are all contained in the ball centered at the origin whose radius is $(t+1/2) \cdot \lambda_1(\L)$. This implies that their number is at most \[\Big(\frac{(t+1/2) \cdot \lambda_1(\L)}{\lambda_1(\L)/2}\Big)^n = (2t+1)^n. \qedhere \]
\end{proof}

The definition of $\lambda_1$ is naturally extended to the {\em successive minima} $\lambda_1,\ldots,\lambda_n$ defined as follows:
\[\lambda_i(\L) = \inf \{r>0~|~\rank(\linspan(\L \cap (r \cdot\calB))) \geq i\},\]
where $\calB$ denotes the ball of radius $1$ centered at the origin.
A somewhat related lattice parameter, denoted $bl(\L)$, is defined as the minimum norm of a basis that generates $\L$. It is known that $bl$ is related to the $n$th successive minimum by $\lambda_n(\L) \leq bl(\L) \leq \frac{\sqrt{n}}{2} \cdot \lambda_n(\L)$ (see, e.g.,~\cite{CaiN97}).

As mentioned before, two lattices $\L_1$ and $\L_2$ are {\em isomorphic} if there exists an orthogonal linear transformation $O:\linspan(\L_1) \rightarrow \linspan(\L_2)$ for which $\L_2=O(\L_1)$. In this paper we study the computational problem, called the {\em Lattice Isomorphism Problem} ($\LIP$), defined as follows. The input consists of two lattices $\L_1$ and $\L_2$ and we are asked to decide if they are isomorphic or not. One subtle issue is how to specify the input to the problem. One obvious way is to follow what is commonly done with other lattice problems, namely, the lattices are given as a set of basis vectors whose entries are given as rational numbers. This however leads to what we feel is an unnecessarily restricted definition: orthogonal matrices typically involve irrational entries, hence bases of two isomorphic lattices will typically also include irrational entries. Such bases, however, cannot be specified exactly as an input. Instead, we follow a much more natural definition (which is clearly as hard as the previous one, making our results stronger) in which the input bases are specified in terms of their \emph{Gram matrices}. Notice that a Gram matrix specifies a basis only up to rotation, but this is clearly inconsequential for $\LIP$.

\begin{definition}
In the Lattice Isomorphism Problem ($\LIP$) the input consists of two Gram matrices $G_1$ and $G_2$, and the goal is to decide if there exists a unimodular matrix $U$ for which $G_1 = U^T \cdot G_2 \cdot U$.
\end{definition}

For clarity, in our algorithms we assume that the input is given as a basis, and we ignore issues of precision. This is justified because (1) an arbitrarily good approximation of a basis can be extracted from a Gram matrix using the Cholesky decomposition; and (2) given a good enough approximation of a purported orthogonal transformation it is possible to check if it corresponds to a true lattice isomorphism by extracting the corresponding (integer-valued) unimodular matrix $U$ that converts between the bases and checking the equality $G_1=U^T G_2 U$, which only involves exact arithmetic. We note that an alternative, possibly more disciplined, solution is to avoid working with lattice vectors directly and instead work with their integer coefficients in terms of a lattice basis, and use the Gram matrix to compute norms and inner products (see, e.g.,~\cite[Page~80]{Cohen93}).

\subsection{Dual Lattices}
The {\em dual lattice} of a lattice $\L$, denoted by $\L^*$, is defined as the set of all vectors in $\linspan(\L)$ that have integer inner
product with all the lattice vectors of $\L$, that is,
\[\L^* = \{u \in \linspan(\L)~|~\forall v \in \L.~\langle u,v \rangle \in \Z \}.\]
The {\em dual basis} of a lattice basis $B$ is denoted by $B^*$ and is defined as the one which satisfies $B^T \cdot B^* = I$ and $\linspan(B)=\linspan(B^*)$, that is, $B^* = B(B^T B)^{-1}$. It is well known that the dual basis generates the dual lattice, i.e., $\L(B)^* = \L(B^*)$.

In~\cite{Bana93} Banaszczyk proved relations between parameters of lattices and parameters of their dual. Such results are known as {\em transference theorems}. One of his results, which is known to be tight up to a multiplicative constant, is the following.
\begin{theorem}[\cite{Bana93}]\label{thm:transference}
For every lattice $\L$ of rank $n$, $1 \leq
\lambda_1(\L) \cdot \lambda_n(\L^*) \leq n$.
\end{theorem}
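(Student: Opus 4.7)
The plan is to handle the two inequalities separately, since they have very different flavors.

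For the lower bound $\lambda_1(\L)\cdot\lambda_n(\L^*)\geq 1$ I would argue elementarily. Let $u\in\L$ realize $\|u\|=\lambda_1(\L)$, and let $v_1,\ldots,v_n\in\L^*$ be linearly independent with each $\|v_i\|\leq\lambda_n(\L^*)$. Since the $v_i$ span $\linspan(\L)$, at least one satisfies $\langle u,v_i\rangle\neq 0$; by the definition of the dual lattice this inner product is an integer, so $|\langle u,v_i\rangle|\geq 1$. Cauchy--Schwarz then gives $1\leq\|u\|\cdot\|v_i\|\leq\lambda_1(\L)\cdot\lambda_n(\L^*)$.

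For the upper bound $\lambda_1(\L)\cdot\lambda_n(\L^*)\leq n$ I would follow Banaszczyk's Gaussian-measure approach. Define $\rho_s(x)=e^{-\pi\|x\|^2/s^2}$ and extend it to a set $A$ by $\rho_s(A)=\sum_{x\in A}\rho_s(x)$. The central analytic input, which I would establish first, is Banaszczyk's tail estimate: for any $c>1/\sqrt{2\pi}$ and any lattice $\L'$ of rank $n$,
\[
\rho_1\bigl(\L'\setminus c\sqrt{n}\,\calB\bigr)\;\leq\;\bigl(c\sqrt{2\pi e}\,e^{-\pi c^{2}}\bigr)^{n}\rho_1(\L'),
\]
with the prefactor strictly less than $1$. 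This is the hardest technical step; its proof introduces a Gaussian dilation parameter, applies Poisson summation to the resulting reweighted sum, and optimizes to extract the exponentially small bound.

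Given the tail estimate, the transference follows by contradiction. Rescale so that $\lambda_1(\L)=1$, and suppose toward contradiction that $\lambda_n(\L^*)>n$. Then every vector of $\L^*$ of norm at most $n$ lies inside some proper subspace $H\subsetneq\linspan(\L)$. Choose the scale $s=\sqrt{n}/c$ with $c$ just above $1/\sqrt{2\pi}$, and apply the (rescaled) tail estimate to $\L^*$: every vector in $\L^*\setminus H$ has norm exceeding $n=c\sqrt{n}\cdot s$, so
\[
\rho_s\bigl(\L^*\setminus H\bigr)\;\leq\;\rho_s\bigl(\L^*\setminus n\calB\bigr)\;<\;\tfrac{1}{2}\rho_s(\L^*),
\]
and hence $\rho_s(\L^*\cap H)>\tfrac{1}{2}\rho_s(\L^*)$. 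I would then combine two Poisson-summation identities: one expressing $\rho_s(\L^*)$ as $s^n\det(\L)\cdot\rho_{1/s}(\L)$ over all of $\L$, and one expressing $\rho_s(\L^*\cap H)$ in terms of the Gaussian mass of the orthogonal projection of $\L$ onto $H$. Because $\lambda_1(\L)=1$ controls $\rho_{1/s}(\L)$ and its projected counterpart, the two dual-side computations produce estimates incompatible with the factor of $\tfrac{1}{2}$ just derived, forcing the desired contradiction.

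The main obstacle, as noted, is Banaszczyk's tail estimate itself: both the choice of dilation parameter and the Poisson-summation bookkeeping are delicate. Organizing the hyperplane-projection step so that the two Poisson identities produce a clean quantitative clash also takes care, but it is essentially a computation once the analytic tail bound and the correct Poisson identities are in place.
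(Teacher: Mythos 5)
A preliminary remark on the comparison itself: the paper never proves Theorem~\ref{thm:transference}; it is quoted from Banaszczyk~\cite{Bana93} and used as a black box, so what you are really doing is reconstructing Banaszczyk's argument. Your lower bound is complete and correct: the successive minima are attained, a shortest nonzero $u\in\L$ cannot be orthogonal to all of $n$ linearly independent dual vectors of norm at most $\lambda_n(\L^*)$, and integrality of the inner product together with Cauchy--Schwarz gives $1\le \lambda_1(\L)\cdot\lambda_n(\L^*)$.

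The upper bound, however, has a genuine gap at exactly the step you describe as ``essentially a computation''. A minor issue first: taking $c$ just above $1/\sqrt{2\pi}$ makes the prefactor $\bigl(c\sqrt{2\pi e}\,e^{-\pi c^2}\bigr)^n$ arbitrarily close to $1$, so the claimed bound by $\frac{1}{2}$ fails for small $n$; you should fix $c$ bounded away from $1/\sqrt{2\pi}$ (e.g.\ $c=1$, $s=\sqrt{n}$), and even then small $n$ may need separate treatment if you want the clean constant $n$. The major issue is the final ``clash''. From $\rho_s(\L^*\cap H)>\frac{1}{2}\rho_s(\L^*)$ you plan to compare the identities $\rho_s(\L^*)=s^n\det(\L)\,\rho_{1/s}(\L)$ and $\rho_s(\L^*\cap H)=\bigl(s^k/\det(\L^*\cap H)\bigr)\rho_{1/s}(\pi_H(\L))$, where $k=\dim H$ and $\pi_H(\L)$ is the dual of $\L^*\cap H$ inside $H$, asserting that $\lambda_1(\L)=1$ ``controls'' both sides. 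It does not: it yields only the trivial lower bound $\rho_{1/s}(\L)\ge 1$, it gives no useful upper bound on $\rho_{1/s}(\pi_H(\L))$ (the projected lattice $\pi_H(\L)$ can contain nonzero vectors far shorter than $\lambda_1(\L)$, so its Gaussian mass at parameter $1/s$ can be exponentially large), and the determinant factors $\det(\L^*\cap H)$ and $\det(\L\cap H^{\perp})$ that enter are not controlled by your hypotheses. So no contradiction follows from the two identities as stated; this step is the heart of Banaszczyk's proof. The standard way to close it is different in character: rescale so that $\lambda_1(\L)>\sqrt{n}$ and work at parameter $1$; let $H\subsetneq\linspan(\L)$ contain all vectors of $\L^*$ of norm at most $\sqrt{n}$, pick a unit vector $u\perp H$, and study the coset masses $\rho(\L+tu)$. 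Poisson summation writes $\rho(\L+tu)$ as a cosine-weighted sum over $\L^*$, and the concentration of $\rho(\L^*)$ on $H$ (where $\langle u,y\rangle=0$) gives $\rho(\L+tu)\ge(1-2\epsilon)\rho(\L)$ for \emph{every} $t$; on the other hand, since $\L$ has no nonzero vector of norm at most $\sqrt{n}$, a suitable choice of $t$ together with the \emph{coset} version of the tail estimate (Banaszczyk's Lemma~1.5(ii), not only the lattice version you state) forces $\rho(\L+tu)$ to be a small fraction of $\rho(\L)$, which is the desired contradiction. Until you supply an argument of this kind (and tune the constants), the upper bound is not proved.
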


\subsection{Korkine-Zolotarev Bases}
Before defining Korkine-Zolotarev bases we need to define the {\em Gram-Schmidt orthogonalization process}. For a sequence of vectors $b_1,\ldots,b_n$ define the corresponding Gram-Schmidt orthogonalized vectors $\tilde{b}_1,\ldots,\tilde{b}_n$ by
\begin{eqnarray*}
\tilde{b}_i = b_i -\sum_{j=1}^{i-1}{\mu_{i,j}\tilde{b}_j},~~~\mu_{i,j} = \frac{\langle b_i, \tilde{b}_j \rangle}{\langle \tilde{b}_j, \tilde{b}_j \rangle}.
\end{eqnarray*}
In words, $\tilde{b}_i$ is the component of $b_i$ orthogonal to $b_1,\ldots,b_{i-1}$. A Korkine-Zolotarev basis is defined as follows.
\begin{definition}\label{def:KZ}
Let $B$ be a basis of a lattice $\L$ of rank $n$ and let $\widetilde{B}$ be the corresponding Gram-Schmidt orthogonalized basis. For $1 \leq i \leq n$ define the projection function $\pi^{(B)}_i(x)=\sum_{j = i}^{n}{\langle x,\tilde{b}_j \rangle \cdot \tilde{b}_j / \|\tilde{b}_j\|^2}$ that maps $x$ to its projection on
$\linspan(\tilde{b}_i,\ldots,\tilde{b}_n)$. A basis $B$ is a {\em Korkine-Zolotarev basis} if for all $1 \leq i \leq n$,
\begin{itemize}
    \item $\tilde{b}_i$ is a shortest nonzero vector in $\pi^{(B)}_i(\L)=\{\pi_i^{(B)}(u)~|~u \in \L\}$,
    \item and for all $j<i$, the Gram-Schmidt coefficients $\mu_{i,j}$
    of $B$ satisfy $|\mu_{i,j}| \leq \frac{1}{2}$.
\end{itemize}
A basis $B$ is a {\em dual Korkine-Zolotarev basis} if its dual $B^*$ is a Korkine-Zolotarev basis.
\end{definition}

Lagarias, Lenstra and Schnorr~\cite{LLS90} related the norms of the vectors in a Korkine-Zolotarev basis to the successive minima of the lattice, as stated below.
\begin{theorem}[\cite{LLS90}]\label{thm:KZ}
If $B$ is a Korkine-Zolotarev basis of a lattice $\L$ of rank $n$, then for all $1 \leq i \leq n$,
\[\|b_i\| \leq \sqrt{i} \cdot \lambda_i(\L).\]
\end{theorem}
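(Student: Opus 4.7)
The plan is to combine the two defining properties of a Korkine–Zolotarev basis by first bounding the Gram–Schmidt norms $\|\tilde b_j\|$ for $j \le i$, and then converting that into a bound on $\|b_i\|$ via the size-reduction condition on the $\mu_{i,j}$.

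First I would show that $\|\tilde b_j\| \leq \lambda_i(\L)$ for every $1 \leq j \leq i$. Pick $i$ linearly independent lattice vectors $v_1,\ldots,v_i \in \L$ of norm at most $\lambda_i(\L)$, which exist by the definition of the $i$-th successive minimum. For each $j \leq i$, the subspace $\linspan(b_1,\ldots,b_{j-1}) = \linspan(\tilde b_1,\ldots,\tilde b_{j-1})$ has dimension $j-1 < i$, so it cannot contain all of $v_1,\ldots,v_i$. Hence some $v_k$ has nonzero projection $\pi_j^{(B)}(v_k)$ onto $\linspan(\tilde b_j,\ldots,\tilde b_n)$, and this projection lies in $\pi_j^{(B)}(\L)$. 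Since $\tilde b_j$ is a shortest nonzero vector of $\pi_j^{(B)}(\L)$ by the first KZ property, and since orthogonal projection does not increase norms, we get
\[
\|\tilde b_j\| \;\leq\; \|\pi_j^{(B)}(v_k)\| \;\leq\; \|v_k\| \;\leq\; \lambda_i(\L).
\]

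For the second step, I use the standard Gram–Schmidt identity
\[
\|b_i\|^2 \;=\; \|\tilde b_i\|^2 + \sum_{j=1}^{i-1} \mu_{i,j}^2 \,\|\tilde b_j\|^2,
\]
plug in the bound $\|\tilde b_j\| \leq \lambda_i(\L)$ from the previous step, and apply the size-reduction condition $|\mu_{i,j}| \leq \tfrac12$ from the second KZ property. This yields
\[
\|b_i\|^2 \;\leq\; \lambda_i(\L)^2 + (i-1)\cdot\tfrac14 \cdot \lambda_i(\L)^2 \;=\; \tfrac{i+3}{4}\,\lambda_i(\L)^2 \;\leq\; i \cdot \lambda_i(\L)^2,
\]
where the final inequality uses $i \geq 1$. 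Taking square roots gives the theorem.

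The one nontrivial step is the dimension argument in the first part: one must realize that, for each $j$, the failure of $\linspan(b_1,\ldots,b_{j-1})$ to contain all of $v_1,\ldots,v_i$ provides a witness vector whose projection under $\pi_j^{(B)}$ is a nonzero element of $\pi_j^{(B)}(\L)$ no longer than $\lambda_i(\L)$. Once this is in hand, everything else is a direct calculation using the two defining conditions of a KZ basis.
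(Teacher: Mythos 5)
Your proof is correct; the paper does not prove this statement but imports it from~\cite{LLS90}, and your argument is precisely the standard one from that source: bound each Gram--Schmidt norm by $\|\tilde b_j\|\le\lambda_i(\L)$ via the projection/dimension argument, then use size-reduction in the orthogonal decomposition of $\|b_i\|^2$. In fact your computation yields the sharper bound $\|b_i\|\le\sqrt{(i+3)/4}\,\lambda_i(\L)$, which implies the stated $\sqrt{i}\,\lambda_i(\L)$ for all $i\ge 1$.
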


The following lemma provides an upper bound on the coefficients of short lattice vectors in terms of a dual Korkine-Zolotarev basis.

\begin{lemma}\label{lemma:coef}
Let $B$ be a dual Korkine-Zolotarev basis of a lattice $\L$ of rank $n$ and let $a_1,\ldots,a_n$ be integer coefficients of a vector $v = \sum_{i=1}^{n}{a_i \cdot b_i}$ of $\L$ satisfying $\|v\| \leq t \cdot \lambda_1(\L)$. Then, for every $1 \leq i \leq n$, $|a_i| \leq t \cdot n^{3/2}$.
\end{lemma}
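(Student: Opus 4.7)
The plan is to express each coefficient $a_i$ as an inner product and then bound the two factors separately using the dual-basis property, Theorem~\ref{thm:KZ}, and the transference theorem (Theorem~\ref{thm:transference}).

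First I would recall the defining relation $B^T \cdot B^* = I$ of the dual basis, which says $\langle b_j, b_i^* \rangle = \delta_{ij}$. Writing $v = \sum_j a_j b_j$ and taking inner product with $b_i^*$ therefore gives the clean identity $a_i = \langle v, b_i^* \rangle$. By Cauchy--Schwarz this yields $|a_i| \leq \|v\| \cdot \|b_i^*\|$, and by hypothesis $\|v\| \leq t \cdot \lambda_1(\L)$, so it remains to show $\|b_i^*\| \leq n^{3/2}/\lambda_1(\L)$.

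Next I would use that $B^*$ is a Korkine--Zolotarev basis of the dual lattice $\L^*$ (this is exactly the definition of $B$ being a dual KZ basis). Applying Theorem~\ref{thm:KZ} to $B^*$ and $\L^*$ gives $\|b_i^*\| \leq \sqrt{i} \cdot \lambda_i(\L^*) \leq \sqrt{n} \cdot \lambda_n(\L^*)$. Finally, Banaszczyk's transference theorem (Theorem~\ref{thm:transference}) provides $\lambda_n(\L^*) \leq n/\lambda_1(\L)$, so $\|b_i^*\| \leq n^{3/2}/\lambda_1(\L)$. Combining gives
\[
|a_i| \leq \|v\| \cdot \|b_i^*\| \leq \big(t \cdot \lambda_1(\L)\big) \cdot \frac{n^{3/2}}{\lambda_1(\L)} = t \cdot n^{3/2},
\]
as required.

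There is essentially no obstacle here: the whole argument is a two-line computation once one recognizes that (i) coefficients in the primal basis are inner products with the dual basis, and (ii) short primal vectors force small coefficients precisely when the dual basis vectors are short, which is exactly what the dual KZ condition plus transference delivers. The only thing to double-check is that the factor $\sqrt{n}$ from Theorem~\ref{thm:KZ} and the factor $n$ from Theorem~\ref{thm:transference} combine to give $n^{3/2}$, matching the stated bound.
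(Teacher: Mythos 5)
Your proposal is correct and follows essentially the same argument as the paper: express $a_i = \langle v, b_i^*\rangle$ via $B^T B^* = I$, apply Cauchy--Schwarz, bound $\|b_i^*\| \leq \sqrt{n}\,\lambda_n(\L^*)$ by Theorem~\ref{thm:KZ} applied to the Korkine--Zolotarev basis $B^*$, and finish with the transference bound $\lambda_1(\L)\lambda_n(\L^*) \leq n$ from Theorem~\ref{thm:transference}. No gaps.
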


\begin{proof}
Since the dual basis $B^*$ satisfies $B^T \cdot B^* = I$, it follows that $a_i = \langle v,b^*_i \rangle$ for every $1 \leq i \leq n$. By the Cauchy-Schwarz inequality, we obtain that
\[|a_i| = |\langle v,b^*_i \rangle| \leq \|v\| \cdot \|b^*_i\| \leq t \cdot \lambda_1(\L) \cdot \sqrt{n} \cdot \lambda_n(\L^*) \leq t \cdot n^{3/2},\]
where the second inequality follows from Theorem~\ref{thm:KZ}, and the third one from Theorem~\ref{thm:transference}.
\end{proof}

\begin{lemma}\label{lemma:normK}
Let $B$ be a dual Korkine-Zolotarev basis of a lattice $\L$ of rank $n$ satisfying $\lambda_1(\L)=\lambda_n(\L)$, and let $B^*$ be its dual. Then, for every integer coefficients $a_1,\ldots,a_n$ satisfying $|a_i| \leq K$ for every $1 \leq i \leq n$, the vector $v = \sum_{i=1}^{n}{a_i \cdot b^*_i} \in \L^*$ satisfies $\|v\| \leq n^{5/2} \cdot K \cdot \lambda_1(\L^*)$.
\end{lemma}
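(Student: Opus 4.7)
The plan is to bound $\|v\|$ by the triangle inequality $\|v\| \leq \sum_i |a_i| \cdot \|b_i^*\| \leq n K \cdot \max_i \|b_i^*\|$, and then to bound each $\|b_i^*\|$ by $n^{3/2}\lambda_1(\L^*)$ using the Korkine-Zolotarev structure of $B^*$ together with the assumption $\lambda_1(\L) = \lambda_n(\L)$ and transference.

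More concretely, by hypothesis $B^*$ is a Korkine-Zolotarev basis of the dual lattice $\L(B)^* = \L^*$, so I can apply Theorem~\ref{thm:KZ} to $B^*$ and $\L^*$ to obtain $\|b_i^*\| \leq \sqrt{i}\cdot \lambda_i(\L^*) \leq \sqrt{n}\cdot \lambda_n(\L^*)$ for every $i$. This reduces the problem to bounding $\lambda_n(\L^*)$ in terms of $\lambda_1(\L^*)$.

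Next I would invoke Banaszczyk's transference theorem (Theorem~\ref{thm:transference}) twice: once for $\L$, giving $\lambda_n(\L^*) \leq n/\lambda_1(\L)$, and once for $\L^*$ (whose dual is $\L$), giving $\lambda_1(\L^*) \cdot \lambda_n(\L) \geq 1$, i.e., $\lambda_1(\L^*) \geq 1/\lambda_n(\L)$. Combining these with the crucial hypothesis $\lambda_1(\L) = \lambda_n(\L)$ yields $\lambda_n(\L^*) \leq n/\lambda_1(\L) = n/\lambda_n(\L) \leq n\cdot \lambda_1(\L^*)$. Substituting back, $\|b_i^*\| \leq n^{3/2}\cdot \lambda_1(\L^*)$, and so $\|v\| \leq nK \cdot n^{3/2}\lambda_1(\L^*) = n^{5/2}K\cdot \lambda_1(\L^*)$, as desired.

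None of the steps is a real obstacle; the only place where care is needed is in correctly applying the transference theorem \emph{in both directions} (to $\L$ and to $\L^*$) and noticing that the equality $\lambda_1(\L) = \lambda_n(\L)$ is exactly what is needed to chain the two inequalities into the bound $\lambda_n(\L^*) \leq n\cdot \lambda_1(\L^*)$. Without this assumption there is only the weaker estimate $\lambda_n(\L^*) \leq n^2 \cdot \lambda_1(\L^*)$ coming from the general bound $\lambda_n(\L)/\lambda_1(\L) \leq n$, which would lose a factor of $n$ and is the reason the hypothesis $\lambda_1 = \lambda_n$ appears in the statement.
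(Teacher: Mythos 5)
Your argument is correct and is essentially identical to the paper's proof: the triangle inequality, Theorem~\ref{thm:KZ} applied to the Korkine--Zolotarev basis $B^*$ of $\L^*$, and two applications of Theorem~\ref{thm:transference} chained through the hypothesis $\lambda_1(\L)=\lambda_n(\L)$ give exactly $\lambda_n(\L^*)\leq n\cdot\lambda_1(\L^*)$ and hence $\|v\|\leq n^{5/2}\cdot K\cdot\lambda_1(\L^*)$. One minor caveat about your closing aside only: there is no general bound $\lambda_n(\L)/\lambda_1(\L)\leq n$ (this ratio can be arbitrarily large), so without the hypothesis $\lambda_1=\lambda_n$ one does not even get the weaker $n^2$ estimate you mention --- but this remark is tangential and does not affect the correctness of your proof.
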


\begin{proof}
First, use Theorem~\ref{thm:transference} twice to obtain \[\lambda_n(\L^*) \leq \frac{n}{\lambda_1(\L)} = \frac{n}{\lambda_n(\L)} \leq n \cdot \lambda_1(\L^*).\]
Now, by the triangle inequality and Theorem~\ref{thm:KZ} applied to the Korkine-Zolotarev basis $B^*$, it follows that
\[\|v\| \leq \sum_{i=1}^{n}{|a_i| \cdot \|b^*_i\|} \leq n \cdot K \cdot \sqrt{n} \cdot \lambda_n(\L^*) \leq n^{5/2} \cdot K \cdot \lambda_1(\L^*). \qedhere \]
\end{proof}

\subsection{Gaussian Measures on Lattices}

For $n \in \N$ and $s>0$ let $\rho_s:\R^n \rightarrow (0,1]$ be the {\em Gaussian function} centered at the origin scaled by a factor of $s$ defined by
\[\forall x \in \R^n.~~\rho_{s}(x)=e^{-\pi \|x/s\|^2}.\]
We define the {\em discrete Gaussian distribution} with parameter $s$ on a lattice $\L$ of rank $n$ by its probability function
\[\forall x \in \L.~~D_{\L,s}(x)=\frac{\rho_{s}(x)}{\rho_{s}(\L)},\]
where for a set $A$ we denote $\rho_s(A) = \sum_{x \in A}{\rho_s(x)}.$
Notice that the sum $\rho_{s}(\L)$ over all lattice vectors is finite, as follows from the fact that $\int_{\R^n}{\rho_s(x){\rm d}x}=s^n$. It can be shown that $D_{\L,s}$ has expectation zero and expected squared norm close to $s^2 n/2\pi$ if $s$ is large enough. We need the following concentration result of Banaszczyk~\cite{Bana93}.

\begin{lemma}[\cite{Bana93},~Lemma~1.5(i)]\label{lemma:BanaNorm}
Let $\L$ be a lattice of rank $n$, and let $u$ be a vector chosen from $D_{\L,s}$. Then, the probability that $\|u\| \geq s \cdot \sqrt{n}$ is $2^{-\Omega(n)}$.
\end{lemma}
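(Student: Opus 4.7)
The plan is to prove this classical concentration bound of Banaszczyk by the Fourier-analytic ``tilt plus Poisson summation'' method. The probability in question equals the ratio $\rho_s(\L\setminus s\sqrt{n}\calB)/\rho_s(\L)$, so the whole task reduces to showing that this ratio decays like $c^n$ for some absolute constant $c<1$.

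First I would apply a Chernoff-style change of measure directly on the lattice. For any $0<t<1$ and any $x$, the elementary identity
\[
\rho_s(x)\cdot e^{\pi t\|x/s\|^2} \;=\; e^{-\pi(1-t)\|x/s\|^2} \;=\; \rho_{s/\sqrt{1-t}}(x)
\]
holds. For $x$ with $\|x\|\geq s\sqrt{n}$ the factor $e^{\pi t\|x/s\|^2}$ is at least $e^{\pi tn}$, so summing over all such $x\in\L$ yields
\[
\rho_s(\L\setminus s\sqrt{n}\calB) \;\leq\; e^{-\pi tn}\cdot\rho_{s/\sqrt{1-t}}(\L).
\]
Second, I would control the inflated ``partition function'' $\rho_{s/\sqrt{1-t}}(\L)$ by the original one via Poisson summation. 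Since $\rho_\sigma$ is Schwartz and $\widehat{\rho_\sigma}(y)=\sigma^n\rho_{1/\sigma}(y)$ (with the Fourier transform taken in $\linspan(\L)$), Poisson summation gives, for any $\sigma>0$,
\[
\rho_\sigma(\L) \;=\; \det(\L^*)\cdot\sigma^n\cdot\rho_{1/\sigma}(\L^*).
\]
Applying this at $\sigma=s/\sqrt{1-t}$ and $\sigma=s$ and dividing, the $\det(\L^*)$ factors cancel, and
\[
\frac{\rho_{s/\sqrt{1-t}}(\L)}{\rho_s(\L)} \;=\; (1-t)^{-n/2}\cdot\frac{\rho_{\sqrt{1-t}/s}(\L^*)}{\rho_{1/s}(\L^*)} \;\leq\; (1-t)^{-n/2},
\]
where the last inequality holds term-by-term on $\L^*$ because for each fixed $x$ the quantity $\rho_\sigma(x)=e^{-\pi\|x/\sigma\|^2}$ is monotone in $\sigma$, and $\sqrt{1-t}/s<1/s$.

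Combining the two steps gives
\[
\Pr_{u\sim D_{\L,s}}\bigl[\|u\|\geq s\sqrt{n}\bigr] \;\leq\; \Bigl(\frac{e^{-2\pi t}}{1-t}\Bigr)^{n/2},
\]
and a one-variable optimization (taking $t=1-1/(2\pi)$) makes the base equal to $2\pi\,e^{1-2\pi}\approx 0.032<1$, which yields $2^{-\Omega(n)}$. The main obstacle is the Poisson summation step: one must invoke it on the rank-$n$ lattice $\L$ (restricting to $\linspan(\L)$ if $\L$ is embedded in a higher ambient dimension) and correctly identify the self-duality $\widehat{\rho_\sigma}=\sigma^n\rho_{1/\sigma}$. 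Once the identity $\rho_\sigma(\L)=\det(\L^*)\sigma^n\rho_{1/\sigma}(\L^*)$ is in hand, the monotonicity comparison on $\L^*$ and the optimization over $t$ are entirely routine.
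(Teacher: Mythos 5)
Your proof is correct: the exponential tilt $\rho_s(x)=e^{-\pi t\|x/s\|^2}\rho_{s/\sqrt{1-t}}(x)$, the Poisson-summation identity $\rho_\sigma(\L)=\det(\L^*)\,\sigma^n\rho_{1/\sigma}(\L^*)$ on $\linspan(\L)$, the term-by-term monotonicity on $\L^*$, and the optimization $t=1-1/(2\pi)$ giving base $2\pi e^{1-2\pi}<1$ all check out. The paper itself offers no proof of this lemma --- it imports it verbatim from Banaszczyk (Lemma~1.5(i) of the cited reference) --- and your argument is exactly the standard Banaszczyk-style proof that underlies that citation, so there is no substantive divergence to report.
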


We also need the following simple claim, which follows from techniques in~\cite{Bana93}.

\begin{claim}[\cite{Bana93}]\label{claim:GaussianBound}
For every $n$-dimensional lattice $\L$, a real $s >0$ and a vector $w \in \R^n$, \[\rho_s(w+\L) \geq \rho_s(w) \cdot \rho_s(\L).\]
\end{claim}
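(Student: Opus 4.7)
The plan is to prove the inequality by exploiting the symmetry of $\rho_s$ under the involution $x \mapsto -x$ on $\L$, together with the elementary inequality $e^t + e^{-t} \geq 2$. The first step is to symmetrize the sum: since $x \mapsto -x$ is a bijection of $\L$,
\[\rho_s(w+\L) = \sum_{x \in \L} \rho_s(w+x) = \frac{1}{2}\sum_{x \in \L}\bigl(\rho_s(w+x) + \rho_s(w-x)\bigr).\]

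Next I would expand $\|w \pm x\|^2 = \|w\|^2 + \|x\|^2 \pm 2\langle w,x\rangle$ to factor each paired summand as
\[\rho_s(w+x) + \rho_s(w-x) = \rho_s(w)\,\rho_s(x)\bigl(e^{-2\pi\langle w,x\rangle/s^2} + e^{2\pi\langle w,x\rangle/s^2}\bigr).\]
The parenthesized expression equals $2\cosh(2\pi\langle w,x\rangle/s^2) \geq 2$, so each paired summand is at least $2\,\rho_s(w)\,\rho_s(x)$. Summing over $x \in \L$ then yields $\rho_s(w+\L) \geq \rho_s(w)\,\rho_s(\L)$, as desired.

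There is no real obstacle here: once the summation has been symmetrized, the argument reduces to a pointwise inequality between the even and odd parts (in $\langle w,x\rangle$) of $\rho_s(w\pm x)$. The only thing to check is that the series defining $\rho_s(\L)$ converges absolutely, so that the rearrangement under $x \mapsto -x$ is legitimate; this is immediate from the Gaussian decay of $\rho_s$ on the discrete set $\L$, which is implicit already in the authors' earlier remark that $\rho_s(\R^n) = s^n$ dominates the lattice sum up to a factor depending on $\det(\L)$.
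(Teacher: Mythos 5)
Your proof is correct and is essentially identical to the paper's: both symmetrize the lattice sum over $x \mapsto -x$, factor out $\rho_s(w)\rho_s(x)$ via the expansion of $\|w \pm x\|^2$, and conclude with $\cosh(\alpha) \geq 1$. No further comment is needed.
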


\begin{proof}
The claim follows from the following calculation.
\begin{eqnarray*}
\rho_s(w+\L) &=& \sum_{x \in \L}{e^{-\pi \|w+x\|^2/s^2}} = \frac{1}{2} \cdot \sum_{x \in \L}{\Big(e^{-\pi \|x+w\|^2/s^2} + e^{-\pi \|x-w\|^2/s^2}\Big)} \\ &=& e^{-\pi \|w\|^2/s^2} \cdot \sum_{x \in \L}{\Big(e^{-\pi \|x\|^2/s^2} \cdot \cosh(2\pi \langle x,w \rangle/s^2) \Big)} \geq \rho_s(w) \cdot \rho_s(\L),
\end{eqnarray*}
where the inequality holds since  $\cosh(\alpha) \geq 1$ for every $\alpha$.
\end{proof}

The problem of efficient sampling from the discrete Gaussian distribution was studied by Gentry, Peikert and Vaikuntanathan~\cite{GentryPV08}. They showed a sampling algorithm whose output distribution is statistically close to the discrete Gaussian distribution on a given lattice, assuming that the parameter $s$ is sufficiently large. For convenience, we state below a recent result of Brakerski et al.~\cite{BrakerskiLPRS13} providing an {\em exact} sampling algorithm from the discrete Gaussian distribution.

\begin{lemma}[\cite{BrakerskiLPRS13},~Lemma~2.3]\label{lemma:SampleD}
There exists a probabilistic polynomial time algorithm $\mathsf{SampleD}$ that given a basis $B$ of a lattice $\L$ of rank $n$ and $s \geq \max_{i}{\|\tilde{b}_i\|}\cdot \sqrt{\ln (2n+4)/\pi}$ outputs a sample distributed according to $D_{\L,s}$.
\end{lemma}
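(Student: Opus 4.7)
The plan is to implement the well-known iterative ``nearest-plane'' Gaussian sampler of Klein and Gentry--Peikert--Vaikuntanathan, instantiated with a sub-routine that samples the one-dimensional discrete Gaussian over $\Z$ \emph{exactly} rather than only statistically close to exact. Given the basis $B$, first compute the Gram--Schmidt vectors $\tilde b_1,\ldots,\tilde b_n$. Then process the basis from $i=n$ down to $i=1$: maintain a running center $c^{(i)}$ with $c^{(n)}=0$; at step $i$ set $c_i = \langle c^{(i)},\tilde b_i\rangle/\|\tilde b_i\|^2$, sample an integer $z_i$ from the one-dimensional discrete Gaussian on $\Z$ with parameter $s/\|\tilde b_i\|$ centered at $-c_i$, and update $c^{(i-1)} = c^{(i)} + z_i\cdot b_i$. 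The output is $\sum_{i=1}^{n} z_i\cdot b_i$.

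Correctness follows by a standard induction on $i$: conditioning on $z_{i+1},\ldots,z_n$, the marginal that $D_{\L,s}$ assigns to the next coefficient $z_i$ is, after peeling off the already-chosen components via orthogonal projection onto $\linspan(\tilde b_i)$, exactly a one-dimensional discrete Gaussian on a coset of $\Z$ with parameter $s/\|\tilde b_i\|$ --- precisely what the sub-routine produces. Multiplying these exact conditionals yields $D_{\L,s}$ on the nose. For this factorization to hold it suffices that each one-dimensional parameter $s_i = s/\|\tilde b_i\|$ lie above the smoothing parameter $\eta_\epsilon(\Z)$ for some very small $\epsilon$; the hypothesis $s \geq \|\tilde b_i\|\cdot\sqrt{\ln(2n+4)/\pi}$ ensures $s_i \geq \sqrt{\ln(2n+4)/\pi}$ for every $i$, which is exactly the bound guaranteeing that the Gaussian sum over $\Z$ is essentially independent of the center (and in particular $\rho_{s_i}(\Z)$ and $\rho_{s_i}(\Z-c_i)$ agree up to a factor arbitrarily close to $1$).

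The main obstacle, and the heart of the Brakerski et al.\ contribution, is the exact one-dimensional sampler. A naive rejection approach draws a candidate integer $z$ in a large window $\{-M,\ldots,M\}$ around the center and accepts with probability $\rho_{s_i}(z-c_i)$; however $\rho_{s_i}(\cdot)$ is transcendental and cannot be evaluated exactly. One resolves this by \emph{lazy} evaluation: compute the acceptance probability to successively higher precision while drawing a uniform real in $[0,1]$ bit by bit on demand, declaring accept or reject as soon as the comparison is resolved. With probability $1$ only finitely many bits are needed, and since $s_i$ exceeds the smoothing parameter the overall acceptance probability of a random candidate is bounded below by an absolute constant, yielding polynomial expected running time. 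The truncation window $M = \mathrm{poly}(n)$ suffices because $\rho_{s_i}(\Z\setminus[-M,M])$ decays super-polynomially once $M \gg s_i$; alternatively, one may sample from the infinite distribution on $\Z$ directly by first choosing the sign/magnitude class via partial sums of $\rho_{s_i}$, again with lazy evaluation of the relevant real numbers. Composing this exact one-dimensional sampler with the outer nearest-plane loop gives the claimed polynomial-time algorithm $\mathsf{SampleD}$ whose output distribution is \emph{equal} to $D_{\L,s}$.
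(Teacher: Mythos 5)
The paper itself gives no proof of this lemma: it is imported verbatim from Brakerski et al.\ \cite{BrakerskiLPRS13}, so there is no internal argument to compare against, and your proposal has to be judged on its own terms. There it has a genuine gap. Your inductive claim that, under $D_{\L,s}$, the conditional law of the coefficient $z_i$ given $z_{i+1},\ldots,z_n$ is \emph{exactly} a one-dimensional discrete Gaussian on a coset of $\Z$ with parameter $s_i=s/\|\tilde b_i\|$ is false. Writing $x=\sum_i z_i b_i$, one does have the exact factorization $\rho_s(x)=\prod_i \rho_{s_i}(z_i+c_i)$ with $c_i$ determined by $z_{i+1},\ldots,z_n$, but the true conditional of $z_i$ is proportional to $\rho_{s_i}(z_i+c_i)\cdot\sum_{z_1,\ldots,z_{i-1}}\prod_{j<i}\rho_{s_j}(z_j+c_j)$, and the trailing sum depends on $z_i$ through the centers $c_j$, $j<i$. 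Consequently the Klein/GPV nearest-plane loop, even when composed with a perfectly exact one-dimensional sampler, outputs $x$ with probability $\rho_s(x)\big/\prod_i\rho_{s_i}(\Z+c_i(x))$, whose denominator varies with $x$; this is precisely why GPV only guarantees statistical closeness. The situation is worse at the stated parameter: $s_i\geq\sqrt{\ln(2n+4)/\pi}$ corresponds to a smoothing guarantee with $\epsilon$ of order $1/n$, so each factor $\rho_{s_i}(\Z+c_i)$ is only within a $(1\pm\epsilon)$-type factor of $\rho_{s_i}(\Z)$, and across $n$ coordinates the output probabilities can deviate from $D_{\L,s}$ by constant factors. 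So your construction is not exact and at this parameter not even negligibly close.

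The missing idea --- which is the actual content of the cited lemma --- is a global correction step on top of the nearest-plane sampler: after producing $x$, accept it with probability proportional to $\prod_i\rho_{s_i}(\Z+c_i(x))$ (normalized by an upper bound so the quantity lies in $(0,1]$), and restart on rejection. Since at this parameter each factor is within a constant of $\rho_{s_i}(\Z)$, the acceptance probability is bounded below by an absolute constant, so the expected running time remains polynomial, and conditioned on acceptance the output is exactly $D_{\L,s}$. The lazy bit-by-bit evaluation you describe is the right tool for comparing against these transcendental quantities, but it must be applied to this rejection step (which involves the infinite sums $\rho_{s_i}(\Z+c_i)$), not merely inside the one-dimensional sampler; exactness of the one-dimensional subroutine alone does not propagate to exactness of the composed sampler.
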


\subsection{Lattice Algorithms}

The following two lemmas provide efficient algorithms for computing lattice bases. In the first, the lattice is given by a generating set, and in the second it is given as an intersection of a lattice and a subspace. Both algorithms are based on what is known as matrices of Hermite normal form (see, e.g.,~\cite[Chapter~8]{MicciancioBook}).

\begin{lemma}\label{lemma:MinimalLattice}
There is a polynomial time algorithm that given a set of vectors computes a basis for the lattice that they generate.
\end{lemma}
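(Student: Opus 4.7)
\medskip
\noindent
\textbf{Proof proposal.} The plan is to reduce the problem to the computation of a column Hermite Normal Form (HNF) of an integer matrix. First, I would clear denominators by multiplying all input vectors by a common positive integer $D$, reducing the task to computing a basis for a lattice in $\Z^m$; dividing the resulting basis by $D$ then yields a basis of the original lattice. So assume that the input consists of integer vectors $v_1,\ldots,v_k \in \Z^m$.

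Next, I would form the integer matrix $M \in \Z^{m \times k}$ whose columns are $v_1,\ldots,v_k$, and use a polynomial-time HNF algorithm to compute a factorization $MU = H$, where $U \in \Z^{k \times k}$ is unimodular and $H \in \Z^{m \times k}$ is in column HNF form. Because $U$ is unimodular, the set of integer column combinations of $H$ equals that of $M$, so the columns of $H$ generate exactly the input lattice. By the structure of the column HNF, the nonzero columns of $H$ form a triangular system in their support rows, and hence are linearly independent. These nonzero columns therefore constitute the desired basis, and after rescaling by $1/D$ we recover a basis of the lattice generated by the original $v_i$.

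The only nontrivial ingredient is the polynomial-time computation of the HNF itself. A naive sequence of unimodular column operations to triangularize $M$ may cause the intermediate integer entries to grow exponentially in the bit-length of the input, which would violate the polynomial running time requirement. Controlling this blowup is the main obstacle, and it is handled by classical polynomial-time HNF algorithms that keep intermediate entries bounded, for instance by reducing modulo the determinant of an appropriate full-rank square submatrix of $M$; we invoke such an algorithm as a black box, see~\cite[Chapter~8]{MicciancioBook}.
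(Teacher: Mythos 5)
Your proposal is correct and follows exactly the route the paper intends: the paper gives no explicit proof, only the remark that the algorithm "is based on matrices of Hermite normal form" with a citation to~\cite[Chapter~8]{MicciancioBook}, which is precisely your HNF argument (including the key point that polynomial-time HNF algorithms must control intermediate entry sizes). Your additional step of clearing denominators to handle rational input vectors is a harmless and sensible completion of the same approach.
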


\begin{lemma}[\cite{MicSODA08},~Lemma~1]\label{lemma:MicSubspace}
There is a polynomial time algorithm that given a basis of an $m$-dimensional lattice $\L$ and a subspace $S$ of $\R^m$ computes a basis of the lattice $\L \cap S$.
\end{lemma}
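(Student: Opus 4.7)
The plan is to reduce the problem to computing the integer kernel of a rational matrix, which is a classical task solvable in polynomial time via Hermite normal form. Let $B$ be the given basis of $\L$ (with $n$ columns in $\R^m$), and assume the subspace $S \subseteq \R^m$ of dimension $k$ is specified rationally, so that we can extract a rational basis $w_1,\ldots,w_{m-k}$ of its orthogonal complement $S^\perp$ (for instance, by Gram--Schmidt, or from a linear system defining $S$). Let $W$ be the $m \times (m-k)$ matrix with columns $w_j$.

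The central observation is that a lattice vector $Bx$ with $x \in \Z^n$ lies in $S$ if and only if it is orthogonal to every $w_j$, which is equivalent to $(W^T B)\,x = 0$. Hence
\[
\L \cap S \;=\; B \cdot \bigl(\ker(W^T B) \cap \Z^n\bigr),
\]
and since $B$ has trivial kernel on its $n$-dimensional image, any $\Z$-basis $y_1,\ldots,y_r$ of the integer kernel $\Lambda := \ker(W^T B) \cap \Z^n$ maps under $B$ to a basis of $\L \cap S$. The task therefore reduces to computing a basis of $\Lambda$ for a rational matrix $M := W^T B$.

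To handle this, first scale $M$ by the LCM of the denominators of its entries; this does not change the integer kernel and yields an integer matrix $M'$. Then compute a unimodular $U \in \Z^{n \times n}$ with $M' U = [H \mid 0]$ where $H$ has full column rank, using any polynomial-time Hermite normal form algorithm (e.g., Kannan--Bachem or a modulo-determinant variant). The columns of $U$ corresponding to the zero block of $M'U$ form a $\Z$-basis of $\Lambda$, and multiplying by $B$ produces the desired basis of $\L \cap S$. Correctness follows directly from the standard properties of HNF.

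The main obstacle is ensuring polynomial bit-complexity throughout. A naive HNF algorithm may produce intermediate entries of exponential size, which would ruin the running-time bound even though the final answer is small. The standard fix is to use an HNF routine that controls intermediate swell, such as computing modulo a suitable multiple of the determinant of a full-rank submatrix, or using the Kannan--Bachem algorithm; both guarantee that all intermediate numbers remain of size polynomial in the input. Once a polynomial-bit HNF is used, every other step---extracting $W$, forming $M'$, clearing denominators, and the final multiplication by $B$---is transparently polynomial, and the lemma follows.
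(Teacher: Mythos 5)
Your proof is correct: reducing $\L \cap S$ to the integer kernel of $W^T B$, extracting a $\Z$-basis of that kernel from the unimodular transformation in a column-style Hermite normal form computation, and pushing it forward through the injective map $x \mapsto Bx$ all work, and you rightly flag intermediate coefficient growth as the only real issue, which polynomial-bit HNF algorithms resolve. The paper itself gives no proof of this lemma (it is imported from Micciancio's SODA 2008 paper), but it explicitly notes that the algorithm is based on Hermite normal form, which is exactly the route you take.
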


The following theorem of Kannan~\cite{Kannan87} provides an algorithm for the Shortest Vector Problem with running time $n^{O(n)}$ and polynomial space complexity. We note that a faster algorithm with running time $2^{O(n)}$ was obtained by Micciancio and Voulgaris in~\cite{MicV10}, however its space complexity is exponential in $n$.

\begin{theorem}[\cite{Kannan87}]\label{thm:Kannan}
There exists an algorithm that given a basis of a lattice $\L$ of rank $n$ computes a shortest nonzero vector of $\L$ in running time $n^{O(n)} \cdot s^{O(1)}$ and in polynomial space, where $s$ denotes the input size.
\end{theorem}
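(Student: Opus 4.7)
The plan is to follow Kannan's original approach, which interleaves a recursive basis-reduction routine with an enumeration subroutine, and to analyze the running time carefully. First I would reduce to the case of a well-conditioned input by running LLL as a polynomial-time preprocessing step to obtain an initial basis with controlled Gram--Schmidt ratios. Then the core of the algorithm is a routine $\mathsf{ShortestVector}(B)$ which, given a basis $B$ of a rank-$n$ lattice, returns a nonzero vector of minimum norm.

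The engine of the algorithm is the enumeration procedure. Writing a lattice vector $v = \sum_{i=1}^n a_i b_i$ in the Gram--Schmidt basis as $v = \sum_{i=1}^n c_i \tilde b_i$, one sees that $c_n = a_n$ and in general $c_i = a_i + \sum_{j>i} \mu_{j,i} a_j$, so the constraint $\|v\| \le R$ translates into $|c_i| \le R / \|\tilde b_i\|$ for each $i$. This enables a depth-$n$ tree search, choosing the integer $a_n$, then $a_{n-1}$, and so on, while at each level the number of candidates is bounded by $2R/\|\tilde b_i\| + 1$. The total number of leaves visited is therefore at most $\prod_{i=1}^n (2R/\|\tilde b_i\| + 1)$, and each node needs only polynomial-space bookkeeping.

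The second ingredient is producing a basis reduced enough that this product is $n^{O(n)}$ when $R = \lambda_1(\L)$. A Korkine--Zolotarev basis does the job: Theorem~\ref{thm:KZ} gives $\|b_i\| \le \sqrt{i}\,\lambda_i(\L)$, and standard bounds on $\|\tilde b_i\|$ in an HKZ basis (via Minkowski applied to the projected lattice $\pi^{(B)}_i(\L)$) imply $\|\tilde b_i\| \ge \lambda_1(\L)/\poly(n)^{n}$, which keeps the enumeration size at $n^{O(n)}$. To construct an HKZ basis we iterate: for $i = 1,\ldots,n$ we need to find a shortest nonzero vector of the projected lattice $\pi^{(B)}_i(\L)$, which has rank $n-i+1$, and lift and size-reduce it. This gives the recurrence $T(n) \le \sum_{k=1}^{n} T(k) + n^{O(n)} \cdot s^{O(1)}$ for the total running time, whose solution is $T(n) = n^{O(n)} \cdot s^{O(1)}$; space stays polynomial because enumeration is a depth-first tree search and the recursion depth is $n$.

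The main obstacle is the circular flavor of the construction: producing the HKZ basis that makes enumeration fast requires solving SVP on sublattices of almost the same rank, so a naive recursion blows up. The standard fix, which I would adopt, is to show that only a weaker quasi-HKZ reduction is needed up front to bound $\|\tilde b_i\|$ well enough for enumeration, and that this weaker reduction can in turn be bootstrapped from LLL plus a small number of projected-SVP calls. Once the recurrence above is established and observed to be dominated by the enumeration cost at the top level, Theorem~\ref{thm:Kannan} follows. The polynomial-space claim is inherited from the fact that every step — LLL, enumeration, size-reduction, and the recursion itself — stores only $O(n)$ vectors with rational entries of polynomial bit length.
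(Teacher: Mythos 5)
First, note that the paper does not prove this statement at all: Theorem~\ref{thm:Kannan} is imported verbatim from Kannan's work \cite{Kannan87}, so there is no internal proof to compare against, and your sketch is essentially an outline of the cited algorithm itself (LLL preprocessing, depth-first enumeration over Gram--Schmidt coefficients, (quasi-)Korkine--Zolotarev reduction, recursion on projected lattices). The enumeration mechanics and the polynomial-space claim in your write-up are fine.

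However, there is a genuine gap at exactly the step that produces the bound $n^{O(n)}$. You bound the enumeration tree by $\prod_{i=1}^n\bigl(2R/\|\tilde b_i\|+1\bigr)$ and then invoke a per-level lower bound $\|\tilde b_i\| \geq \lambda_1(\L)/\poly(n)^{n}$; plugging that into the product gives only $\bigl(\poly(n)^{n}\bigr)^{n} = n^{O(n^2)}$, not $n^{O(n)}$, so as written your argument does not yield the theorem. The actual $n^{O(n)}$ bound in Kannan's analysis does not come from termwise lower bounds on the $\|\tilde b_i\|$ (which for HKZ bases can indeed fall below $\lambda_1(\L)$); it comes from a \emph{global} estimate, in essence $\prod_{i=1}^n \|b_1\|/\|\tilde b_i\| = \|b_1\|^{n}/\det(\L) \leq n^{n/2}$ by Minkowski's theorem once $\|b_1\|$ is within a small factor of $\lambda_1(\L)$, together with a careful count of the nodes at every level of the tree. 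That global argument, plus the proof that quasi-HKZ reduction (recursing only on the rank-$(n-1)$ projected lattice) already suffices and the correct recurrence -- something like $T(n) \leq \poly(n)\, T(n-1) + n^{O(n)} s^{O(1)}$, rather than the circular $T(n) \leq \sum_{k=1}^{n} T(k) + \cdots$ you wrote, in which $T(n)$ appears on both sides -- is precisely the nontrivial content of \cite{Kannan87}, and your proposal names it ("the standard fix, which I would adopt") without carrying it out. A smaller issue of the same flavor: $\lambda_1(\L)$ is not known in advance, so the enumeration radius must be taken as $\|b_1\|$ of the reduced basis and the analysis must tolerate this.
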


The definition of Korkine-Zolotarev bases (Definition~\ref{def:KZ}) immediately implies that a Korkine-Zolotarev basis generating a given lattice of rank $n$ can be efficiently computed using $n$ calls to an algorithm that finds a shortest nonzero vector in a lattice. This gives us the following corollary.

\begin{corollary}\label{cor:KZalg}
There exists an algorithm that given a basis of a lattice $\L$ of rank $n$ computes a Korkine-Zolotarev basis generating $\L$ in running time $n^{O(n)} \cdot s^{O(1)}$ and in polynomial space, where $s$ denotes the input size.
\end{corollary}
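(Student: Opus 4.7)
The plan is exactly the one hinted at in the sentence preceding the corollary: construct the KZ basis column by column, each time invoking Kannan's shortest vector algorithm on an appropriate projected sublattice. Starting from the input basis of $\L$, I maintain a current basis $B = (b_1,\ldots,b_n)$ of $\L$ and, in iteration $i = 1,\ldots,n$, overwrite $b_i,\ldots,b_n$ so that $b_1,\ldots,b_i$ already satisfy the KZ conditions through index $i$.

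At iteration $i$, assume $b_1,\ldots,b_{i-1}$ are already fixed. First, I compute a basis for the projected lattice $\pi_i^{(B)}(\L)$, which has rank $n-i+1$: apply the projection $\pi_i^{(B)}$ (its coefficients are read off from $b_1,\ldots,b_{i-1}$ via Gram--Schmidt) to every vector of the current basis of $\L$, yielding a generating set of $\pi_i^{(B)}(\L)$, and then extract a basis via Lemma~\ref{lemma:MinimalLattice}. Second, invoke Kannan's algorithm (Theorem~\ref{thm:Kannan}) on this projected lattice to obtain a shortest nonzero vector $\tilde{b}_i \in \pi_i^{(B)}(\L)$; by Theorem~\ref{thm:Kannan} this call costs $n^{O(n)} \cdot s^{O(1)}$ time in polynomial space. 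Third, lift $\tilde{b}_i$ to some $b_i \in \L$ with $\pi_i^{(B)}(b_i) = \tilde{b}_i$: the integer combination expressing $\tilde{b}_i$ over the projected basis, applied to the corresponding unprojected vectors of $\L$, supplies such a lift. Finally, apply the standard size-reduction step, replacing $b_i$ by $b_i - \sum_{j<i} \round{\mu_{i,j}} b_j$ (processed in order of decreasing $j$), to force $|\mu_{i,j}| \le 1/2$ for all $j < i$.

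After $n$ iterations, both clauses of Definition~\ref{def:KZ} hold by construction, so $B$ is a KZ basis generating $\L$. The total time is $n$ calls to Kannan's algorithm plus polynomial-time bookkeeping, giving $n^{O(n)} \cdot s^{O(1)}$. Space remains polynomial since Kannan's routine uses polynomial space and only one basis is stored at any point.

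The one place that needs care is the bit-length of intermediate data, which is also what I expect to be the main obstacle. A raw lift of $\tilde{b}_i$ may have very large coefficients against $b_1,\ldots,b_{i-1}$, and if these propagate into later projections the sizes of the lattices fed to Kannan's algorithm could blow up and destroy the $s^{O(1)}$ dependence. This is precisely what the size-reduction step is designed to prevent: after size reduction, $\|b_i\|^2 \le \|\tilde{b}_i\|^2 + \tfrac{1}{4}\sum_{j<i}\|\tilde{b}_j\|^2$, and combined with Theorem~\ref{thm:KZ} this bounds $\|b_i\|$ polynomially in the successive minima of $\L$, which are themselves polynomially bounded in the input size. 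Hence every subsequent invocation of Kannan's algorithm is on an instance whose bit-length is $s^{O(1)}$, and the claimed running-time and space bounds carry through cleanly.
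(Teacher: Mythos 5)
Your proposal is correct and is exactly the paper's intended argument: the paper proves the corollary only by the remark preceding it, namely that Definition~\ref{def:KZ} immediately yields a KZ basis via $n$ calls to Kannan's shortest-vector algorithm (Theorem~\ref{thm:Kannan}) on successive projected lattices, which is precisely the iterative project--solve--lift--size-reduce procedure you describe. Your additional care about lifting, size reduction, and bit-length is a fine (and standard) elaboration of details the paper leaves implicit, with only the cosmetic slip that it is the \emph{bit-length} of the successive minima and intermediate vectors, not their magnitude, that is polynomial in $s$.
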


Another corollary of Theorem~\ref{thm:Kannan} is the following.

\begin{corollary}\label{cor:allSValg}
There exists an algorithm that given a basis of a lattice $\L$ of rank $n$ and a number $t \geq 1$, outputs all the lattice vectors $v \in \L$ satisfying $\|v\| \leq t \cdot \lambda_1(\L)$ in running time $(t \cdot n)^{O(n)} \cdot s^{O(1)}$ and in polynomial space, where $s$ denotes the input size.
\end{corollary}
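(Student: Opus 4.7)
The plan is to reduce the enumeration of short lattice vectors to a brute-force search over a bounded coefficient box, using a dual Korkine-Zolotarev basis to control the coefficient sizes. The key observation is that Lemma~\ref{lemma:coef} already gives us precisely the kind of coefficient bound we need: if $B$ is a dual KZ basis of $\L$ and $v = \sum_i a_i b_i$ satisfies $\|v\| \leq t \cdot \lambda_1(\L)$, then $|a_i| \leq t \cdot n^{3/2}$ for every $i$.

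First I would obtain a dual KZ basis $B$ of $\L$. This is done by invoking Corollary~\ref{cor:KZalg} on the dual lattice $\L^*$ (whose basis is easy to compute from the input basis of $\L$ via $B^* = B(B^TB)^{-1}$) to produce a KZ basis of $\L^*$, and then dualizing back to get a basis $B$ of $\L$ that is, by Definition~\ref{def:KZ}, a dual KZ basis. This step takes time $n^{O(n)} \cdot s^{O(1)}$ and uses only polynomial space. In parallel I would use Theorem~\ref{thm:Kannan} to compute $\lambda_1(\L)$ exactly, within the same resource bounds.

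Next, applying Lemma~\ref{lemma:coef}, every vector $v \in \L$ of norm at most $t \cdot \lambda_1(\L)$ is an integer combination $v = \sum_i a_i b_i$ with $|a_i| \leq t \cdot n^{3/2}$. I would therefore enumerate all integer tuples $(a_1,\ldots,a_n) \in \Z^n$ in the box $[-\lceil tn^{3/2}\rceil, \lceil tn^{3/2}\rceil]^n$, for each tuple compute the candidate vector $v$ and its norm (using the Gram matrix $B^T B$ for exact arithmetic), and output $v$ whenever $\|v\| \leq t \cdot \lambda_1(\L)$. The box contains at most $(2tn^{3/2}+1)^n = (tn)^{O(n)}$ tuples, each processed in polynomial time, giving overall running time $(tn)^{O(n)} \cdot s^{O(1)}$ and polynomial space, since the coefficient tuples can be generated on the fly without storing all of them.

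There is no real obstacle here; the content of the corollary is essentially a packaging of Theorem~\ref{thm:Kannan}, Corollary~\ref{cor:KZalg}, and Lemma~\ref{lemma:coef}. The only point that requires a moment of care is that Lemma~\ref{lemma:coef} needs a \emph{dual} KZ basis rather than a KZ basis, which is why one computes a KZ basis of $\L^*$ and then dualizes. Note also that the upper bound on the output size is consistent with Fact~\ref{fact:short}, which gives at most $(2t+1)^n$ such vectors, so the enumeration is not wastefully large relative to the number of vectors produced.
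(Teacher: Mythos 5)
Your proposal is correct and takes essentially the same route as the paper's own proof: compute $\lambda_1(\L)$ via Theorem~\ref{thm:Kannan} and a dual Korkine-Zolotarev basis via Corollary~\ref{cor:KZalg}, then enumerate all coefficient vectors in the box of side $2t\cdot n^{3/2}+1$ guaranteed by Lemma~\ref{lemma:coef} and output those of norm at most $t\cdot\lambda_1(\L)$. Your explicit remark that the dual KZ basis is obtained by computing a KZ basis of $\L^*$ and dualizing back is a small clarification that the paper leaves implicit in its invocation of Corollary~\ref{cor:KZalg}.
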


\begin{proof}
Given a lattice $\L$ of rank $n$ it is possible to compute $\lambda_1(\L)$ using Theorem~\ref{thm:Kannan} and a dual Korkine-Zolotarev basis $B$ generating $\L$ using Corollary~\ref{cor:KZalg}. Now, consider the algorithm that goes over all the linear integer combinations of the vectors in $B$ with all coefficients of absolute value at most $t \cdot n^{3/2}$ and outputs the ones that have norm at most $t \cdot \lambda_1(\L)$. The correctness of the algorithm follows from Lemma~\ref{lemma:coef}.

By Theorem~\ref{thm:Kannan} and Corollary~\ref{cor:KZalg}, the space complexity needed to compute $\lambda_1(\L)$ and $B$ is polynomial in the input size $s$, and the running time is $n^{O(n)} \cdot s^{O(1)}$. The number of iterations in the algorithm above is $(2t \cdot n^{3/2}+1)^n = (t \cdot n)^{O(n)}$. It follows that the algorithm has space complexity polynomial in $s$ and running time $(t \cdot n)^{O(n)} \cdot s^{O(1)}$, as required.
\end{proof}

\section{A Generalized Isolation Lemma}\label{sec:isolation}

In this section we prove a new generalized version of the isolation lemma of~\cite{ValiantV86}. The situation under study is the following. Let $C$ be a set of vectors in $\Z^n$ with bounded entries, and let $E:P(\Z^n) \rightarrow P(\Z^n)$ be some function from the power set of $\Z^n$ to itself, which we refer to as an {\em elimination function}. It might be useful to think of $E$ as the linear span function restricted to $\Z^n$, as for this function we will obtain Lemma~\ref{lemma:isolation_spanIntro}.

Our goal is to show that a random integer $n$-dimensional vector $z$ with bounded entries with high probability {\em uniquely} defines a sequence of vectors $x_1,\ldots,x_d$ in $C$ as follows. The vector $x_1$ is the unique vector in $C \setminus E(\emptyset)$ that achieves the minimum inner product of $z$ with vectors in $C \setminus E(\emptyset)$. Once $x_1$ is chosen, it cannot be chosen anymore, and, moreover, a certain subset of $C$, denoted $E(\{x_1\})$, is eliminated from $C$ so that its elements cannot be chosen in the next steps. Similarly, $x_2$ is the unique vector in $C \setminus E(\{x_1\})$ that achieves the minimum inner product of $z$ with vectors in $C \setminus E(\{x_1\})$, and, as before, the elements in the set $E(\{x_1,x_2\})$ cannot be chosen from now on. This process proceeds until we obtain $d$ vectors $x_1,\ldots,x_d$ which eliminate the whole $C$, that is, $C \subseteq E(\{x_1,\ldots,x_d\})$, and satisfy $x_j \in C \setminus E(\{x_1,\ldots,x_{j-1}\})$ for every $1 \leq j \leq d$.

The above process is a generalization of several known cases of the isolation lemma. For example, if the function $E$ is defined to output the empty set on itself and $\Z^n$ on every other set, then the process above will give us a single vector $x_1 \in C$ that uniquely minimizes the inner product with $z$, just like the standard isolation lemma. As another example, consider the function $E$ which is defined to act like the identity function on sets of size smaller than $d$ and to output $\Z^n$ on every other set. With this $E$ we will obtain $d$ vectors which uniquely achieve the minimum $d$ inner products of vectors in $C$ with $z$. Another example for a function $E$, which is the one used for Lemma~\ref{lemma:isolation_spanIntro}, is defined by $E(A) = \linspan(A) \cap \Z^n$. Using this elimination function we obtain $d$ linearly independent vectors $x_1,\ldots,x_d$ in $C$, such that $x_j$ uniquely achieves the minimum inner product of $z$ with vectors in $C \setminus \linspan(\{x_1,\ldots,x_{j-1}\})$ for every $1 \leq j \leq d$ where $d = \rank(\linspan(C))$.

We turn to define the type of elimination functions considered in our isolation lemma.

\begin{definition}\label{def:elimination}
For a set family $\calF \subseteq P(\Z^n)$, which is closed under intersection and satisfies $\Z^n \in \calF$, define its {\em elimination function} $E:P(\Z^n) \rightarrow P(\Z^n)$ by $E(A) = \bigcap\{X \in \calF \mid A \subseteq X\} \in \calF.$
\end{definition}

We note that all the elimination functions considered in the examples above can be defined as in Definition~\ref{def:elimination}. For the standard isolation lemma take $\calF = \{\emptyset,\Z^n\}$, for the $d$ uniquely achieved minimum inner products take $\calF = \{X \subseteq \Z^n \mid |X|<d\} \cup \{\Z^n\}$, and for the span elimination function take $\calF$ to be the family of all sets $S \cap \Z^n$ where $S$ is a linear subspace of $\R^n$. It is easy to see that all these set families are closed under intersection and include $\Z^n$.

\begin{claim}\label{claim:elimination}
Let $\calF \subseteq P(\R^n)$ be a set family as in Definition~\ref{def:elimination}, and let $E:P(\Z^n) \rightarrow P(\Z^n)$ be its elimination function. Then, for every $A,B \in P(\Z^n)$,
\begin{enumerate}
  \item\label{claim:elim} $A \subseteq E(A)$,
  \item\label{claim:contain} $A \subseteq E(B)$ implies $E(A) \subseteq E(B)$, and
  \item\label{claim:itemContain} $A \subseteq B$ implies $E(A) \subseteq E(B)$.
\end{enumerate}
\end{claim}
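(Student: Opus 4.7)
The plan is to prove the three items essentially directly from the definition of $E$, relying on the fact that $\calF$ being closed under intersection and containing $\Z^n$ guarantees that $E(A)$ is well-defined and actually belongs to $\calF$ for every $A \in P(\Z^n)$.

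First I would observe the basic fact that for every $A \in P(\Z^n)$, the family $\{X \in \calF \mid A \subseteq X\}$ is nonempty (it contains $\Z^n$) and is closed under intersection (since $\calF$ is), so $E(A) \in \calF$ and $A \subseteq E(A)$ because each $X$ in the defining intersection contains $A$. This already gives item~\ref{claim:elim}.

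Next I would prove item~\ref{claim:contain}. Assume $A \subseteq E(B)$. Since $E(B) \in \calF$ by the observation above, the set $E(B)$ appears as one of the sets $X \in \calF$ with $A \subseteq X$ in the intersection defining $E(A)$. Hence $E(A) \subseteq E(B)$.

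Finally I would deduce item~\ref{claim:itemContain} from the previous two items: assuming $A \subseteq B$, item~\ref{claim:elim} applied to $B$ gives $B \subseteq E(B)$, so by transitivity $A \subseteq E(B)$, and item~\ref{claim:contain} then yields $E(A) \subseteq E(B)$. There is no real obstacle here; the only subtlety is the opening remark that $E(A) \in \calF$, which is what makes item~\ref{claim:contain} work and which relies on $\calF$ being closed under intersection together with $\Z^n \in \calF$.
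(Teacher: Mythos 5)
Your proof is correct and takes essentially the same approach as the paper: items~1 and~2 follow directly from the intersection definition of $E$, with the key point being that $E(B) \in \calF$ (guaranteed by closure under intersection and $\Z^n \in \calF$) so that $E(B)$ appears among the sets intersected to form $E(A)$. The only cosmetic difference is that you deduce item~3 from items~1 and~2, whereas the paper argues it directly (every set of $\calF$ containing $B$ also contains $A$); both arguments are equally valid.
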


\begin{proof}
Item~\ref{claim:elim} is immediate from the definition of $E$. For Item~\ref{claim:contain}, assume $A \subseteq E(B)$. By the definition of $E$, $E(A)$ is contained in every set of $\calF$ which contains $A$, hence, in particular, it is contained in $E(B)$. For Item~\ref{claim:itemContain}, assume $A \subseteq B$. This implies that every set of $\calF$ which contains $B$ contains $A$ as well, therefore $E(A) \subseteq E(B)$.
\end{proof}

\begin{remark}
It can be shown that for every function $E:P(\Z^n) \rightarrow P(\Z^n)$ which satisfies Items~\ref{claim:elim} and~\ref{claim:contain} in Claim~\ref{claim:elimination} there exists a set family $\calF$ which is closed under intersection and induces $E$ as in Definition~\ref{def:elimination}.
\end{remark}

The following definition will be used in the statement of our isolation lemma.
\begin{definition}\label{def:uniquely}
For an elimination function $E:P(\Z^n) \rightarrow P(\Z^n)$ as in Definition~\ref{def:elimination} and a set $C \subseteq \Z^n$, a {\em chain} of length $d$ in $C$ is a sequence of $d$ vectors $x_1,\ldots,x_d \in C$ such that $x_j \notin E(\{x_1,\ldots,x_{j-1}\})$ for every $1 \leq j \leq d$. If, in addition, $C \subseteq E(\{x_1,\ldots,x_d\})$ we say that the chain is {\em maximal}. We say that a vector $z \in \Z^n$ {\em uniquely defines a chain} $x_1,\ldots,x_d$ in $C$ if for every $1 \leq j \leq d$, the minimum inner product of $z$ with vectors in $C \setminus E(\{x_1,\ldots,x_{j-1}\})$ is uniquely achieved by $x_j$.
\end{definition}

\begin{lemma}[A Generalized Isolation Lemma]\label{lemma:isolation}
Let $E:P(\Z^n) \rightarrow P(\Z^n)$ be an elimination function as in Definition~\ref{def:elimination}. Let $C \subseteq \Z^n$ be a set of vectors satisfying $\|c\|_\infty \leq K$ for every $c \in C$, such that every chain in $C$ has length at most $m$. Let $z=(z_1,\ldots,z_n)$ be a random vector such that each $z_i$ is independently chosen from the uniform distribution over $\{1,\ldots,R\}$ for $R = K(2K+1)m^2 n/\eps$. Then, with probability at least $1-\eps$, $z$ uniquely defines a maximal chain in $C$.
\end{lemma}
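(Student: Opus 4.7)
The plan is to adapt the Valiant-Vazirani isolation argument, originally used to isolate a single minimizer, to the multi-step chain-building process implicit in the lemma. The overall strategy is a union bound over the $n$ coordinates combined with, for each coordinate, an algebraic count of the bad values of $z_i$.

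First, I would reduce the failure event to a per-coordinate event. If the process fails then at some step $j$ the minimum of $\langle z, c\rangle$ over $c \in C \setminus E(\{x_1,\ldots,x_{j-1}\})$ is achieved by two distinct vectors $a \neq b$, which must differ in some coordinate $i$; call this a coordinate-$i$ failure. Any process failure is a coordinate-$i$ failure for at least one $i$, so $\Pr[\text{fail}] \leq \sum_{i=1}^n \Pr[\text{coordinate-}i\text{ failure}]$.

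Next, fix $i$ and condition on the remaining entries $z_{-i}$. For each $A \in \calF$ and each $v \in \{-K,\ldots,K\}$, let $\alpha_A^v = \min\{\sum_{k \neq i} z_k c_k : c \in C \setminus A,\; c_i = v\}$, with the convention $\alpha_A^v = +\infty$ when the defining set is empty. Then $\min_{c \in C \setminus A}\langle z, c\rangle = \min_v (z_i v + \alpha_A^v)$ is piecewise linear in $z_i$, and a coordinate-$i$ tie at a step with current state $A$ between two values $v_1 \neq v_2$ forces $z_i(v_1 - v_2) = \alpha_A^{v_2} - \alpha_A^{v_1}$, pinning $z_i$ to a single integer. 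Thus each triple (step $j$, state $A$, unordered pair $\{v_1, v_2\}$) contributes at most one bad $z_i$.

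The third and most delicate step is to bound the total number of such triples, over realized states, by $K(2K+1)m^2$. There are at most $m$ steps and $\binom{2K+1}{2} = K(2K+1)$ unordered pairs, so it remains to bound by $m$ the number of distinct states $A$ that can arise as $E(\{x_1,\ldots,x_{j-1}\})$ across $z_i \in \{1,\ldots,R\}$. The key structural inputs are: (i) for each $z_i$ the realized states form a strict chain $A_0 \subsetneq A_1 \subsetneq \cdots \subsetneq A_d$ with $d \leq m$ by the chain-length hypothesis; (ii) for fixed $A$, the coordinate-$i$ value of the minimizer of $\langle z,\cdot\rangle$ on $C \setminus A$ is monotone non-increasing in $z_i$, by a short exchange argument on the lines $z_i v + \alpha_A^v$; and (iii) the intersection-closure property of $\calF$ from Claim~\ref{claim:elimination}, which constrains how the state can transition. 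Combining these gives at most $K(2K+1)m^2$ bad $z_i$ per coordinate, and the union bound over coordinates yields $\Pr[\text{fail}] \leq K(2K+1)m^2 n / R = \eps$ with $R = K(2K+1)m^2 n/\eps$, as desired.

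The main obstacle is the state-counting in the third step: a naive induction on the step index gives exponential blow-up because the state can in principle branch by a factor of $K(2K+1)$ per realized predecessor. The $O(m)$-per-step bound therefore requires the elimination-function structure and the monotonicity to be used in concert rather than via a per-step union bound, leveraging the fact that the realized chains $A_0 \subsetneq \cdots \subsetneq A_d$ have bounded length uniformly in $z_i$ to control branching.
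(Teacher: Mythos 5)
Your first two steps---reducing failure to a per-coordinate tie (the paper's ``singular'' index) and noting that, once the eliminated set $A$ and the two coordinate values $v_1\neq v_2$ are fixed, a tie pins $z_i$ to at most one value---match the paper's proof. The gap is exactly the third step, which you flag but never carry out, and the bound you propose there is false. Already after the first step of the process, the realized states across $z_i\in\{1,\ldots,R\}$ are the sets $E(\{x_1\})$ for the various first minimizers $x_1$, and since the minimizing value of the $i$-th coordinate can be any of its $2K+1$ possible values as $z_i$ sweeps, there can be up to $2K+1$ distinct states; this exceeds $m$ (and even $m^2$) whenever $K$ is large compared to $m$, e.g.\ for the elimination function induced by $\calF=\{X\subseteq\Z^n : |X|<2\}\cup\{\Z^n\}$, where $m=2$. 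Moreover, the accounting ``(number of realized states)$\,\times\binom{2K+1}{2}$'' is intrinsically lossy, because a tie at a state $A$ is only harmful at the particular $z_i$ for which $A$ is the current state; even if you established a state count of order $Km^2$, you would only obtain the lemma with $R$ inflated by extra factors of $K$, not with the stated $R=K(2K+1)m^2n/\eps$. The monotonicity of the minimizing slope and the intersection-closure of $\calF$ do not by themselves tame the branching of states, and no argument is given that they do.

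The paper sidesteps state-counting entirely by a change of accounting that removes the dependence on $z_i$. Fix $i$ and $z_{-i}$, slice $C$ into $C^{(t)}=\{c\in C \mid c_i=t\}$, and partition each slice into level sets $C^{(t)}_z[r]$ according to the inner product with $z$; within a slice, the partition, its ordering, and which level sets are \emph{contributing} (i.e., not contained in $E$ applied to the union of the earlier level sets of that slice) are all independent of $z_i$. Choosing one vector from each contributing level set of $C^{(t)}$ yields a chain in $C$, so each slice has at most $m$ contributing level sets. If coordinate $i$ is singular, two vectors $b,c$ with $b_i\neq c_i$ lie in a single contributing level set of $C$, and by Claim~\ref{claim:elimination} the corresponding level sets of the slices $C^{(b_i)}$ and $C^{(c_i)}$ are contributing and attain equal inner product with $z$; for each pair of slices and pair of contributing level sets, this coincidence fixes $z_i$ to at most one value. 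Hence at most $m^2\binom{2K+1}{2}=K(2K+1)m^2$ values of $z_i$ are bad, giving failure probability $\eps/n$ per coordinate, with no need to track how the chain itself varies with $z_i$. To salvage your route you would need a comparably $z_i$-independent combinatorial structure; as written, the decisive counting step is missing.
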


We need the following additional notations to be used in the proof.
\begin{definition}
 For a set $C \subseteq \Z^n$ and a vector $z \in \Z^n$, we let $C_z[r]$ denote the set of all vectors in $C$ whose inner product with $z$ is $r$, that is, $C_z[r] = \{ c \in C \mid \langle z,c \rangle = r \}$. For an elimination function $E:P(\Z^n) \rightarrow P(\Z^n)$ as in Definition~\ref{def:elimination}, we say that a set $C_z[r]$ is {\em contributing} to $C$ if it is not contained in the set obtained by applying $E$ to the set of vectors in $C$ whose inner product with $z$ is smaller than $r$, equivalently, $C_z[r] \nsubseteq E(\cup_{r':r'<r}{C_z[r']})$.
\end{definition}

\begin{proof}[Proof of Lemma~\ref{lemma:isolation}]
For a vector $z=(z_1,\ldots,z_n)$, we say that an index $1 \leq i \leq n$ is {\em singular} if (1) $z$ uniquely defines a chain $x_1,\ldots,x_{j}$ in $C$ for some $j \geq 0$, but (2) there are at least two vectors in $C \setminus E(\{x_1,\ldots,x_{j}\})$ that differ in the $i$th coordinate and achieve the minimum inner product with $z$ among the vectors in $C \setminus E(\{x_1,\ldots,x_{j}\})$. We prove below that for every $1 \leq i \leq n$, the probability that $i$ is singular is at most $\eps/n$. By the union bound, with probability at least $1-\eps$ none of the indices is singular, thus the lemma follows.

From now on fix an arbitrary index $1 \leq i \leq n$ and the values of $z_1,\ldots,z_{i-1},z_{i+1},\ldots,z_n$. For every $-K \leq t \leq K$ denote \[C^{(t)} = \{c \in C \mid c_i=t \}.\]
Partition every $C^{(t)}$ into the sets $C^{(t)}_z[r]$, and note that every $c,c' \in C^{(t)}$ are in the same set if and only if $\sum_{j:j \neq i}{z_j c_j}=\sum_{j:j \neq i}{z_j {c'}_j}$, independently of the value of $z_i$. Similarly, the order of the sets $C^{(t)}_z[r]$ in a non-decreasing value of $r$ is independent of $z_i$. Finally, observe that for every $t \neq t'$ and every two sets in the partitions of $C^{(t)}$ and $C^{(t')}$, there is at most one value of $z_i$ for which the inner products of $z$ with the vectors in the two sets are equal.

For every $t$ we denote by $A^{(t)}_z$ the set of all integers $r$ for which $C^{(t)}_z[r]$ is contributing to $C^{(t)}$. Using Item~\ref{claim:itemContain} of Claim~\ref{claim:elimination}, one can choose one vector from every set $C^{(t)}_z[r]$ for $r \in A^{(t)}_z$ to obtain a chain of length $|A^{(t)}_z|$. Hence, our assumption on $C$ implies that $|A^{(t)}_z| \leq m$. This gives us $2K+1$ sets $A^{(t)}_z$, each of which is of size at most $m$. Hence, there are at most $m^2 \cdot {2K+1 \choose 2} = m^2 (2K+1)K$ possible values of $z_i$ for which two distinct sets $A^{(t)}_z$ intersect. Since $z_i$ is uniformly chosen from $\{1,\ldots,R\}$, the probability that two distinct sets $A^{(t)}_z$ intersect is at most $m^2 (2K+1)K/R = \eps/n$.

To complete the proof, it suffices to show that if $i$ is singular for a vector $z$, then there exist two distinct intersecting sets $A^{(t)}_z$. Assume that $i$ is singular for a vector $z$. This implies that $z$ uniquely defines a chain $x_1,\ldots,x_{j}$ in $C$ for some $j \geq 0$, but there are two vectors $b,c$ satisfying $b_i \neq c_i$ that achieve the minimum inner product of $z$ with vectors in $C \setminus E(\{x_1,\ldots,x_{j}\})$. Partition $C$ into the sets $C_z[r]$, and let $A_z$ be the set of all integers $r$ for which $C_z[r]$ is contributing to $C$. Using Item~\ref{claim:contain} of Claim~\ref{claim:elimination}, it follows that there exists some $r \in A_z$ for which the contributing set $C_z[r]$ contains $b$ and $c$ which both do not belong to $E(\cup_{r':r'<r}{C_z[r']})$. In particular, since $C^{(t)}_z[r] \subseteq C_z[r]$ for every $t$ and $r$, Item~\ref{claim:itemContain} of Claim~\ref{claim:elimination} implies that $C^{(b_i)}_z[r]$ is contributing to $C^{(b_i)}$ and that $C^{(c_i)}_z[r]$ is contributing to $C^{(c_i)}$. Hence, $r$ belongs to both $A^{(b_i)}_z$ and $A^{(c_i)}_z$, as required.
\end{proof}

Now, we turn to derive the special case of the previous lemma, which is used in the next section (Lemma~\ref{lemma:isolation_spanIntro}). To state it, we use the following definition which is analogous to Definition~\ref{def:uniquely} for the span elimination function.

\begin{definition}\label{def:span_uniquely}
For a set $A \subseteq \R^m$ and a vector $v \in \R^m$, we say that $v$ {\em uniquely defines a linearly independent chain} of length $n$ in $A$ if there are $n$ vectors $x_1,\ldots,x_n \in A$ such that for every $1 \leq j \leq n$, the minimum inner product of $v$ with vectors in $A \setminus \linspan(x_1,\ldots,x_{j-1})$ is uniquely achieved by $x_j$.
\end{definition}

\begin{corollary}\label{cor:isolation_span}
Let $C \subseteq \Z^n$ be a set of vectors satisfying $\|c\|_\infty \leq K$ for every $c \in C$ and $\linspan(C) = \R^n$. Let $z=(z_1,\ldots,z_n)$ be a random vector such that each $z_i$ is independently chosen from the uniform distribution over $\{1,\ldots,R\}$ for $R = K(2K+1)n^3/\eps$. Then, with probability at least $1-\eps$, $z$ uniquely defines a linearly independent chain of length $n$ in $C$.
\end{corollary}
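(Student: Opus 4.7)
The plan is to instantiate the generalized isolation lemma (Lemma~\ref{lemma:isolation}) with the span elimination function and then check that the three ingredients --- the structure of $E$, the bound on chain length, and the translation of ``maximal chain'' into ``linearly independent chain of length $n$'' --- line up correctly.

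First, I would take the set family $\calF = \{ S \cap \Z^n \mid S \text{ is a linear subspace of } \R^n \}$. It is closed under intersection (intersection of subspaces is a subspace) and contains $\Z^n = \R^n \cap \Z^n$, so Definition~\ref{def:elimination} applies and produces the elimination function $E(A) = \linspan(A) \cap \Z^n$ (since the smallest member of $\calF$ containing $A$ is $\linspan(A) \cap \Z^n$). This is precisely the ``span elimination function'' singled out in the discussion preceding Definition~\ref{def:elimination}.

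Next I would translate chains under this $E$ into linear-algebraic language. A chain $x_1,\ldots,x_d \in C$ satisfies $x_j \notin E(\{x_1,\ldots,x_{j-1}\}) = \linspan(x_1,\ldots,x_{j-1}) \cap \Z^n$ for each $j$; since $x_j \in C \subseteq \Z^n$, this is equivalent to $x_j \notin \linspan(x_1,\ldots,x_{j-1})$, i.e.\ the $x_j$'s are linearly independent. In particular every chain in $C$ has length at most $n$, so the hypothesis of Lemma~\ref{lemma:isolation} is satisfied with $m = n$. Moreover, a chain $x_1,\ldots,x_d$ is maximal iff $C \subseteq \linspan(x_1,\ldots,x_d) \cap \Z^n$, which together with the assumption $\linspan(C) = \R^n$ forces $\linspan(x_1,\ldots,x_d) = \R^n$ and hence $d = n$.

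Finally I would apply Lemma~\ref{lemma:isolation} with $m = n$: the parameter becomes $R = K(2K+1) m^2 n/\eps = K(2K+1) n^3/\eps$, matching the statement of the corollary. With probability at least $1-\eps$ the random $z$ uniquely defines a maximal chain $x_1,\ldots,x_n$ in $C$, and by the translation above this is exactly a linearly independent chain of length $n$ in the sense of Definition~\ref{def:span_uniquely}, since the condition ``the minimum inner product of $z$ with vectors in $C \setminus E(\{x_1,\ldots,x_{j-1}\})$ is uniquely achieved by $x_j$'' coincides with the corresponding condition over $C \setminus \linspan(x_1,\ldots,x_{j-1})$ (using $C \subseteq \Z^n$). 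There is no real obstacle here --- the only thing that requires a moment's care is noticing that the $\Z^n$-factor in $E(A) = \linspan(A) \cap \Z^n$ is harmless because all relevant vectors already lie in $\Z^n$.
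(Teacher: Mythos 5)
Your proposal is correct and follows essentially the same route as the paper's proof: the same set family $\calF = \{S \cap \Z^n \mid S \mbox{ a subspace of } \R^n\}$, the same identification of chains with linearly independent sequences giving $m=n$, and the same use of $\linspan(C)=\R^n$ to conclude that a maximal chain has length $n$. Your extra remarks (the explicit check that $C \subseteq \Z^n$ makes the $\Z^n$-factor in $E$ harmless, and the verification of the value of $R$) only spell out details the paper leaves implicit.
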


\begin{proof}
Consider the set family $\calF = \{ S \cap \Z^n \mid S \mbox{ is a subspace of }\R^n \}$. The family $\calF$ includes $\Z^n$ and is closed under intersection since subspaces of $\R^n$ are. The elimination function $E$ that $\calF$ induces is defined by $E(A) = \linspan(A) \cap \Z^n$. Observe that the vectors of every chain in $C$ (with respect to this $E$) are linearly independent, thus its length is at most $n$. Apply Lemma~\ref{lemma:isolation} with $m=n$ to obtain that the random vector $z$, with probability $1-\eps$, uniquely defines a maximal linearly independent chain in $C$. Finally, the assumption $\linspan(C) = \R^n$ implies that the length of every maximal linearly independent chain in $C$ is $n$.
\end{proof}

\section{The Algorithm}\label{sec:algorithm}

In this section we present our algorithm for $\LIP$ proving Theorem~\ref{thm:MainAlgIntro}.

\subsection{The Case \texorpdfstring{$\lambda_1=\lambda_n$}{Lambda1 = Lambdan}}

We start with the special case of lattices of rank $n$ that satisfy $\lambda_1=\lambda_n$ (i.e., contain $n$ linearly independent shortest vectors), and prove the following.

\begin{theorem}\label{thm:SpecialAlg}
There exists an algorithm that given two bases of lattices $\L_1$ and $\L_2$ of rank $n$ satisfying $\lambda_1=\lambda_n$, outputs all orthogonal linear transformations $O:\linspan(\L_1) \rightarrow \linspan(\L_2)$ for which $\L_2 = O(\L_1)$ in running time $n^{O(n)} \cdot s^{O(1)}$ and in polynomial space, where $s$ denotes the input size. In addition, the number of these transformations is at most $n^{O(n)}$.
\end{theorem}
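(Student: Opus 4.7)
The plan is to refine the naive approach (sketched in the introduction) of mapping shortest vectors of $\L_1$ to shortest vectors of $\L_2$, using the generalized isolation lemma to canonically single out $n$ linearly independent vectors of $A_i := \{x \in \L_i : \|x\| = \lambda_1(\L_i)\}$, and thereby cutting the search space down to $n^{O(n)}$. Concretely, the algorithm proceeds in three phases: (i) compute $A_1$ and $A_2$ using Corollary~\ref{cor:allSValg} with $t=1$, each of size at most $3^n$ by Fact~\ref{fact:short}; (ii) find a short dual vector $v \in \L_1^*$ that uniquely defines a linearly independent chain $x_1,\ldots,x_n$ in $A_1$ (in the sense of Definition~\ref{def:span_uniquely}); (iii) enumerate all possible images $w \in \L_2^*$ of $v$ under an isomorphism, for each recover the induced chain $y_1,\ldots,y_n$ in $A_2$, form the candidate map $x_j \mapsto y_j$, and verify that it extends to an orthogonal transformation mapping $\L_1$ onto $\L_2$.

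For (ii), compute a dual Korkine-Zolotarev basis $B$ of $\L_1$ via Corollary~\ref{cor:KZalg}, and identify each $x \in A_1$ with its integer coefficient vector $c(x) \in \Z^n$ in $B$. By Lemma~\ref{lemma:coef} with $t = 1$, $\|c(x)\|_\infty \leq n^{3/2}$; and $\linspan\{c(x) : x \in A_1\} = \R^n$ because $\lambda_1 = \lambda_n$. Writing a putative dual vector as $v = \sum_i a_i b_i^*$ for $a \in \Z^n$, we have $\langle v, x \rangle = \langle a, c(x) \rangle$. Corollary~\ref{cor:isolation_span} applied with $K = n^{3/2}$ and $\eps = 1/2$ then guarantees the existence of $a \in \{1,\ldots,R\}^n$ with $R = \poly(n)$ for which the corresponding $v$ uniquely defines a linearly independent chain of length $n$ in $A_1$; Lemma~\ref{lemma:normK} further bounds $\|v\| \leq \poly(n) \cdot \lambda_1(\L_1^*)$. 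Such a $v$ is therefore found deterministically by enumerating all $\L_1^*$-vectors of norm at most this bound (via Corollary~\ref{cor:allSValg} applied to $\L_1^*$), testing each in polynomial time for whether it defines a full chain in $A_1$; the enumeration has size $n^{O(n)}$ by Fact~\ref{fact:short}.

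For (iii), observe that for any isomorphism $O:\L_1 \to \L_2$, the image $w := O(v)$ lies in $\L_2^*$ with $\|w\| = \|v\|$ and, being an isometry bijecting $A_1$ with $A_2$ and preserving linear spans, uniquely defines the chain $O(x_1),\ldots,O(x_n)$ in $A_2$ with the same inner products as $v$ has with the $x_j$'s. We therefore enumerate all $w \in \L_2^*$ of norm $\|v\|$ (at most $n^{O(n)}$ such vectors by Fact~\ref{fact:short}), and for each $w$ extract greedily $y_j \in A_2$ as the unique minimizer of $\langle w, \cdot \rangle$ over $A_2 \setminus \linspan(y_1,\ldots,y_{j-1})$, discarding $w$ if no unique minimizer exists or if $\langle w, y_j \rangle \neq \langle v, x_j \rangle$ for some $j$. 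The resulting candidate extends uniquely to a linear map on $\linspan(\L_1)$ (since $x_1,\ldots,x_n$ are linearly independent); by Fact~\ref{fact:Gram} this extension is orthogonal exactly when the Gram matrices of $(x_1,\ldots,x_n)$ and $(y_1,\ldots,y_n)$ coincide, in which case we additionally verify $O(\L_1) = \L_2$ by checking that $O$ sends each input basis vector of $\L_1$ into $\L_2$ and conversely, which reduces to solving integer linear systems in the $x_j$'s and $y_j$'s.

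The total running time is dominated by the three calls to Corollary~\ref{cor:allSValg} on $\L_1,\L_1^*,\L_2^*$ with $t = \poly(n)$, each fitting the $n^{O(n)} \cdot s^{O(1)}$ budget in polynomial space, together with polynomial-time verification per enumerated $w$. The bound of $n^{O(n)}$ on the number of output transformations follows because each isomorphism $O$ is determined by the value $O(v)$, which must lie in the enumerated set of size $n^{O(n)}$. The main obstacle I expect is careful bookkeeping of the polynomial factors, in particular threading the $\|v\|$ bound from Lemma~\ref{lemma:normK} through to the sizes of the $\L_1^*$ and $\L_2^*$ enumerations so that the $n^{O(n)}$ bound survives, rather than any conceptual difficulty.
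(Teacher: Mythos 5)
Your proposal is correct and follows essentially the same route as the paper: your phase (ii) is exactly the paper's Theorem~\ref{thm:uniquelyL} (dual Korkine-Zolotarev basis, Lemma~\ref{lemma:coef}, Corollary~\ref{cor:isolation_span}, and Lemma~\ref{lemma:normK} yielding a dual vector of norm at most $5n^{17/2}\cdot\lambda_1(\L_1^*)$ that uniquely defines a chain in $A_1$), and your phase (iii) is the paper's Algorithm~\ref{alg:ISO_new}. The bookkeeping you flag is resolved in the paper by enumerating $W_2=\{x\in\L_2^* \mid \|x\|\leq 5n^{17/2}\cdot\lambda_1(\L_2^*)\}$, so the $\L_2^*$ enumeration is $n^{O(n)}$ even on non-isomorphic inputs (with $O(v)\in W_2$ in the isomorphic case since then $\lambda_1(\L_1^*)=\lambda_1(\L_2^*)$), and by re-enumerating the sets $A_i$ and $W_i$ on the fly rather than storing them, which is what makes the space polynomial.
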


The algorithm that implies Theorem~\ref{thm:SpecialAlg} relies on the following theorem (recall Definition~\ref{def:span_uniquely}).

\begin{theorem}\label{thm:uniquelyL}
Let $\L$ be a lattice of rank $n$ satisfying $\lambda_1(\L)=\lambda_n(\L)$, and let $A$ denote the set of all shortest nonzero vectors of $\L$. Then there exists a vector $v \in \L^*$ that uniquely defines a linearly independent chain of length $n$ in $A$ and satisfies $\|v\| \leq 5n^{17/2} \cdot \lambda_1(\L^*)$.
\end{theorem}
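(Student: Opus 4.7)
The plan is to reduce the theorem to the generalized isolation lemma (Corollary~\ref{cor:isolation_span}) by moving to coordinates with respect to a dual Korkine--Zolotarev basis of $\L$. Let $B$ be such a basis, so that $B^*$ is a Korkine--Zolotarev basis of $\L^*$ satisfying $B^{*T} B = I$. Every shortest vector $x \in A$ has a unique integer coordinate vector $a(x) \in \Z^n$ with $x = B \cdot a(x)$, and by Lemma~\ref{lemma:coef} applied with $t=1$ these coordinates satisfy $\|a(x)\|_\infty \leq n^{3/2}$. Setting $C = \{a(x) \mid x \in A\} \subseteq \Z^n$ therefore gives a set with $\ell_\infty$ bound $K = n^{3/2}$, and $\linspan(C) = \R^n$ follows from the assumption $\lambda_1(\L) = \lambda_n(\L)$ together with the fact that $a \mapsto Ba$ is a linear isomorphism from $\R^n$ to $\linspan(\L)$.

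Next, I would apply Corollary~\ref{cor:isolation_span} to $C$ with $K = n^{3/2}$ and a suitable constant $\eps < 1$. This guarantees the existence of a vector $z \in \{1,\ldots,R\}^n$, where $R = K(2K+1)n^3/\eps = O(n^6)$, that uniquely defines a linearly independent chain $a_1,\ldots,a_n$ of length $n$ in $C$. I would then take $v = \sum_{i=1}^n z_i b_i^* = B^* z \in \L^*$ as the desired dual vector. The key identity is that $B^{*T} B = I$ gives, for every $x = B a(x) \in A$,
\[
\langle v, x \rangle \;=\; z^T B^{*T} B\, a(x) \;=\; \langle z, a(x) \rangle,
\]
so inner products of $v$ with vectors in $A$ exactly match inner products of $z$ with the corresponding vectors in $C$. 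Combined with the observation that $a \mapsto Ba$ preserves linear spans, the chain uniquely defined by $z$ in $C$ lifts to a linearly independent chain $Ba_1,\ldots,Ba_n$ in $A$ that is uniquely defined by $v$ in the sense of Definition~\ref{def:span_uniquely}.

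For the norm bound, I would invoke Lemma~\ref{lemma:normK}, which is tailor-made for vectors of the form $\sum z_i b_i^*$ in the dual of a lattice with $\lambda_1 = \lambda_n$ whose dual Korkine--Zolotarev basis is $B$. Plugging in $|z_i| \leq R$ yields
\[
\|v\| \;\leq\; n^{5/2} \cdot R \cdot \lambda_1(\L^*) \;\leq\; \tfrac{O(1)}{\eps} \cdot n^{17/2} \cdot \lambda_1(\L^*),
\]
and after substituting the specific bound $K(2K+1) \leq 3 n^3$ and choosing $\eps$ close enough to $1$, the constant can be checked to fit under $5$, giving the stated bound.

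The only real obstacle is conceptual: one must check that the bijection $x \leftrightarrow a(x)$ transports every property needed by the isolation lemma (bounded entries, full linear span, inner product ordering, linear independence, and the ``minimum among what remains outside the span of previously chosen vectors'' condition) faithfully between $A \subseteq \L$ and $C \subseteq \Z^n$. Once that dictionary is set up, the result is a direct combination of the generalized isolation lemma with the two transference-type estimates Lemma~\ref{lemma:coef} and Lemma~\ref{lemma:normK}, and the choice of $\eps$ is the only parameter that needs tuning to get the explicit constant $5$.
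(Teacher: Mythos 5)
Your proposal is correct and follows essentially the same route as the paper: pass to coefficient vectors with respect to a dual Korkine--Zolotarev basis, bound them via Lemma~\ref{lemma:coef} with $t=1$, apply Corollary~\ref{cor:isolation_span} with $K=n^{3/2}$, lift the isolating vector as $v=B^*z$ using $\langle B x, B^* z\rangle = \langle x,z\rangle$, and bound $\|v\|$ with Lemma~\ref{lemma:normK}. The only difference is the parameter choice (the paper takes $\eps=1/2$, giving $R\le 5n^6$ directly), and your tuning of $\eps$ likewise yields the constant $5$.
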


\begin{proof}
Let $B$ be a dual Korkine-Zolotarev basis generating the lattice $\L$, and let $C$ be the set of coefficients of shortest nonzero vectors of $\L$ in terms of the basis $B$, that is, \[C = \{ x \in \Z^n \mid Bx \in A\}.\]
Observe that Lemma~\ref{lemma:coef} applied with $t=1$ implies that all the entries of the integer vectors in $C$ have absolute value at most $n^{3/2}$. Since $\lambda_1(\L)=\lambda_n(\L)$, $A$ contains $n$ linearly independent vectors, hence their coefficient vectors in $C$ are linearly independent as well, so $\linspan(C)=\R^n$. We apply the isolation lemma (Corollary~\ref{cor:isolation_span}) with $K=n^{3/2}$ and, say, $\eps=1/2$. We obtain that for $R = 2K(2K+1)n^3 \leq 5 n^6$, there exists a vector $z \in \{1,\ldots,R\}^n$ that uniquely defines a linearly independent chain of length $n$ in $C$. Since $\langle x,y \rangle  = \langle Bx,B^*y \rangle$ for every $x,y \in \R^n$, it follows that the vector \[v = B^*z = \sum_{i=1}^{n}{z_i \cdot b^*_i} \in \L^*\] uniquely defines a linearly independent chain of length $n$ in $A$. Finally, since $B$ is a dual Korkine-Zolotarev basis generating $\L$ and $\lambda_1(\L) = \lambda_n(\L)$, Lemma~\ref{lemma:normK} implies that \[\|v\| \leq n^{5/2} \cdot 5n^{6} \cdot \lambda_1(\L^*) = 5n^{17/2} \cdot \lambda_1(\L^*). \qedhere \]
\end{proof}

\begin{proof}[Proof of Theorem~\ref{thm:SpecialAlg}]
Let $\L_1$ and $\L_2$ be the lattices generated by the input bases $B_1$ and $B_2$. Consider the algorithm that acts as follows (see Algorithm~\ref{alg:ISO_new}). For $i \in \{1,2\}$, the algorithm computes the set $A_i$ of all shortest nonzero vectors of $\L_i$ and the set $W_i$ of all vectors in the dual lattice $\L_i^*$ of norm at most $5n^{17/2} \cdot \lambda_1(\L_i^*)$. These sets can be computed using the algorithm from Corollary~\ref{cor:allSValg}. Given these sets, the algorithm finds a $w_1 \in W_1$ that uniquely defines a linearly independent chain of length $n$ in $A_1$ and the corresponding chain $x_1,\ldots,x_n \in A_1$. The existence of $w_1$ is guaranteed by Theorem~\ref{thm:uniquelyL}. Then, the algorithm goes over all vectors $w_2 \in W_2$ and for every $w_2$ which uniquely defines a linearly independent chain $y_1,\ldots,y_n \in A_2$ it checks if the linear transformation $O:\linspan(\L_1) \rightarrow \linspan(\L_2)$, defined by $O(x_i)=y_i$ for every $1 \leq i \leq n$, is orthogonal and maps $\L_1$ to $\L_2$. If this is the case, then $O$ is inserted to the output set.

\begin{algorithm}[ht]
    \caption{Lattice Isomorphism -- Special Case}
    \textbf{Input:} Two bases of lattices $\L_1$ and $\L_2$ of rank $n$ satisfying $\lambda_1(\L_1) = \lambda_n(\L_1)$ and $\lambda_1(\L_2) = \lambda_n(\L_2)$. \\
    \textbf{Output:} The set $\Output$ of all orthogonal linear transformations $O:\linspan(\L_1) \rightarrow \linspan(\L_2)$ for which $\L_2 = O(\L_1)$.
    \begin{algorithmic}[1]
        \ForAll{$i=1,2$}
            \State{$A_i \leftarrow \{ x \in \L_i \mid \|x\| = \lambda_1(\L_i) \}$}\label{line:A_i}\Comment{Corollary~\ref{cor:allSValg}}
            \State{$W_i \leftarrow \{ x \in \L^*_i \mid \|x\| \leq 5n^{17/2} \cdot \lambda_1(\L^*_i) \}$}\label{line:W_i}\Comment{Corollary~\ref{cor:allSValg}}
        \EndFor
        \ForAll{$w_1 \in W_1$}
            \If{$w_1$ uniquely defines a linearly independent chain of length $n$ in $A_1$}
                \State{$(x_1,\ldots,x_n) \leftarrow $ the maximal linearly independent chain that $w_1$ uniquely defines in $A_1$}
                \State{\textbf{goto} line~\ref{line:second_vec}}
            \EndIf
        \EndFor
        \ForAll{$w_2 \in W_2$}\label{line:second_vec}
            \If{$w_2$ uniquely defines a linearly independent chain of length $n$ in $A_2$}
                \State{$(y_1,\ldots,y_n) \leftarrow $ the maximal linearly independent chain that $w_2$ uniquely defines in $A_2$}
                \State{$O \leftarrow$ the linear transformation that maps $x_i$ to $y_i$ for every $1 \leq i \leq n$}\label{line:Odef}
                \If{$O$ is orthogonal and satisfies $\L_2=O(\L_1)$}
                    \State{$\Output \leftarrow \Output \cup \{O\}$}\label{line:O}
                \EndIf
            \EndIf
        \EndFor
    \end{algorithmic}
    \label{alg:ISO_new}
\end{algorithm}

We turn to prove the correctness of the algorithm. It is clear from the algorithm that any linear transformation in the output is orthogonal and maps $\L_1$ to $\L_2$. We claim that every orthogonal linear transformation that maps $\L_1$ to $\L_2$ is in the output. To see this, let $O:\linspan(\L_1) \rightarrow \linspan(\L_2)$ be such a transformation. Consider the vector $u = O(w_1)$ where $w_1 \in \L_1^*$ is the vector which is computed by the algorithm and uniquely defines a linearly independent chain $x_1,\ldots,x_n$ in $A_1$. Since $O$ preserves inner products, it follows that $u \in \L_2^*$ and that
\[\|u\| = \|w_1\| \leq 5n^{17/2} \cdot \lambda_1(\L_1^*) = 5n^{17/2} \cdot \lambda_1(\L_2^*).\]
Therefore, $u$ belongs to $W_2$. Since $A_2 = O(A_1)$, it follows that $u$ uniquely defines a linearly independent chain of length $n$ in $A_2$, and that this chain is $O(x_1),\ldots,O(x_n)$. Thus, the chain $y_1,\ldots,y_n$, which is computed by the algorithm for $u$, satisfies $O(x_i)=y_i$ for every $1 \leq i \leq n$, so the algorithm includes $O$ in its output.

Now we analyze the running time and the space complexity of Algorithm~\ref{alg:ISO_new}. We start with the running time, and focus on its dependence on the rank $n$, ignoring terms which are polynomial in the input size $s$. By Corollary~\ref{cor:allSValg}, the running time needed to compute the sets $A_i$ is $n^{O(n)}$ and to compute the sets $W_i$ is $(5n^{17/2} \cdot n)^{O(n)} = n^{O(n)}$. By Fact~\ref{fact:short}, we have $|A_i| = 2^{O(n)}$ and $|W_i| = n^{O(n)}$. Given a vector $w$ and a set $A$, it is possible to check in time polynomial in $|A|$ and in the input size if $w$ uniquely defines a linearly independent chain of length $n$ in $A$, and if so to compute the chain. Hence the total running time is $n^{O(n)}$. For the space complexity of the algorithm recall that the algorithm from Corollary~\ref{cor:allSValg} requires only polynomial space. In order to have only polynomial space complexity in Algorithm~\ref{alg:ISO_new}, it should be implemented in a way that the sets $A_i$ and $W_i$ are not stored at any step of the algorithm. Instead, whenever the algorithm checks if a vector uniquely defines a linearly independent chain of length $n$ in a set $A_i$, this set should be recomputed. Since the number of calls to this procedure is $n^{O(n)}$, the running time remains $n^{O(n)}$. Similarly, the sets $W_i$ should not be stored, as it suffices to enumerate their elements in order to implement the algorithm. Therefore, the algorithm can be implemented in a way that requires only polynomial space complexity and the stated running time. Finally, observe that the number of returned orthogonal linear transformations is bounded from above by $|W_2|$, hence is at most $n^{O(n)}$.
\end{proof}

\subsection{The General Case}

Now, we turn to deal with the general case, where the successive minima of the input lattices are not necessarily all equal. We start with the following simple lemma.

\begin{lemma}\label{lemma:general}
Let $\L_1$ and $\L_2$ be two lattices, and let $O:\linspan(\L_1) \rightarrow \linspan(\L_2)$ be an orthogonal linear transformation satisfying $\L_2 = O(\L_1)$. For $i \in\{1,2\}$, let $V_i$ be the linear subspace spanned by all shortest nonzero vectors of $\L_i$, and let $\pi_i$ denote the projection of $\linspan(\L_i)$ to the orthogonal complement to $V_i$. Then,
\begin{enumerate}
  \item The restriction $O|_{V_1}$ of $O$ to $V_1$ is an orthogonal linear transformation mapping the lattice $\L_1 \cap V_1$ to the lattice $\L_2 \cap V_2$. In addition, the lattices $\L_1 \cap V_1$ and $\L_2 \cap V_2$ have the same rank $k$ and they both satisfy $\lambda_1=\lambda_k$.
  \item The restriction $O|_{\pi_1(\linspan(\L_1))}$ of $O$ to $\pi_1(\linspan(\L_1))$ is an orthogonal linear transformation mapping the lattice $\pi_1(\L_1)$ to the lattice $\pi_2(\L_2)$.
\end{enumerate}
\end{lemma}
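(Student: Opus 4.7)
The plan is to exploit the fact that $O$, being an isometry mapping $\L_1$ bijectively onto $\L_2$, must induce a bijection between the sets of shortest nonzero vectors of $\L_1$ and $\L_2$ (since $O$ preserves norms and $\lambda_1(\L_1) = \lambda_1(\L_2)$). Taking linear spans on both sides immediately gives $O(V_1) = V_2$, and from this point on both parts should follow from elementary properties of orthogonal linear transformations.

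For the first part, once $O(V_1) = V_2$ is established, the restriction $O|_{V_1} : V_1 \to V_2$ is automatically an orthogonal linear transformation. Since $O$ bijects $\L_1$ with $\L_2$ and $V_1$ with $V_2$, the image of $\L_1 \cap V_1$ under $O$ is exactly $\L_2 \cap V_2$. Setting $k = \dim V_1 = \dim V_2$, the sublattice $\L_i \cap V_i$ has rank $k$ because $V_i$ is spanned by vectors of $\L_i$ (in fact, by shortest nonzero vectors). Furthermore, by definition $V_i$ contains $k$ linearly independent vectors of length $\lambda_1(\L_i)$, all of which lie in $\L_i \cap V_i$, so $\lambda_1(\L_i \cap V_i) = \lambda_k(\L_i \cap V_i) = \lambda_1(\L_i)$, as required.

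For the second part, the main observation is that an isometry sends orthogonal complements to orthogonal complements: since $O(V_1) = V_2$, decomposing any $v \in \linspan(\L_1)$ as $v = u + w$ with $u \in V_1$ and $w$ in the orthogonal complement of $V_1$ inside $\linspan(\L_1)$, one gets $O(v) = O(u) + O(w)$ with $O(u) \in V_2$ and $O(w)$ in the orthogonal complement of $V_2$ inside $\linspan(\L_2)$. This yields the commutation relation $\pi_2 \circ O = O \circ \pi_1$ on $\linspan(\L_1)$. Combined with $O(\L_1) = \L_2$, it gives $O(\pi_1(\L_1)) = \pi_2(\L_2)$, and the restriction of $O$ to $\pi_1(\linspan(\L_1))$ is orthogonal by inheritance from $O$.

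There is no real obstacle in this lemma; the only point requiring care is to work consistently with orthogonal complements taken inside the lattice spans $\linspan(\L_i)$ rather than in the ambient Euclidean space (since the lattices need not be of full dimension). This is harmless because $O$ is, by hypothesis, a map between these two spans and is orthogonal with respect to the inherited inner product, so all the decompositions and projections above stay within the correct subspaces.
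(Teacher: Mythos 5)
Your proposal is correct and follows essentially the same route as the paper: establish $O(V_1)=V_2$ from norm preservation, deduce $O(\L_1\cap V_1)=\L_2\cap V_2$ and the equality $\lambda_1=\lambda_k$ from the $k$ linearly independent shortest vectors, and prove the second part via the commutation relation $O\circ\pi_1=\pi_2\circ O$ on $\linspan(\L_1)$. Your justification of that commutation (decomposing $v=u+w$ with respect to $V_1$ and its orthogonal complement inside the span) is just a slightly more explicit version of the step the paper states directly.
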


\begin{proof}
Since $O$ preserves lengths, $v$ is a shortest nonzero vector of $\L_1$ if and only if $O(v)$ is a shortest nonzero vector of $\L_2$, thus $O(V_1)=V_2$. This implies that $O$ satisfies $O(\L_1 \cap V_1) = \L_2 \cap V_2$, so does its restriction $O|_{V_1}$. Therefore, the lattices $\L_1 \cap V_1$ and $\L_2 \cap V_2$ are isomorphic and, in particular, have the same rank $k$. Since these lattices contain $k$ linearly independent shortest nonzero vectors, it follows that they both satisfy $\lambda_1=\lambda_k$. Now, observe that every $x \in \linspan(\L_1)$ satisfies $O(\pi_1(x)) = \pi_2(O(x))$. Hence, \[O(\pi_1(\L_1)) = \pi_2(O(\L_1)) = \pi_2(\L_2),\] so $O|_{\pi_1(\linspan(\L_1))}$ is an orthogonal linear transformation mapping $\pi_1(\L_1)$ to $\pi_2(\L_2)$.
\end{proof}

Equipped with Lemma~\ref{lemma:general}, Theorem~\ref{thm:MainAlgIntro} follows quite easily.

\begin{proof}[Proof of Theorem~\ref{thm:MainAlgIntro}]

Let $\L_1$ and $\L_2$ be the lattices generated by the input bases $B_1$ and $B_2$. Consider the algorithm that acts as follows (see Algorithm~\ref{alg:IsoGeneral}). For $i \in \{1,2\}$, the algorithm computes the linear subspace $V_i$ spanned by all shortest nonzero vectors of $\L_i$ and the projection $\pi_i$ of $\linspan(\L_i)$ to the orthogonal complement to $V_i$. This can be done using the algorithm from Corollary~\ref{cor:allSValg} for computing shortest nonzero vectors of a given lattice. If the lattices $\L_1 \cap V_1$ and $\L_2 \cap V_2$ do not have the same rank, then the algorithm outputs that the lattices $\L_1$ and $\L_2$ are not isomorphic. Otherwise, the algorithm computes, using the algorithm from Theorem~\ref{thm:SpecialAlg}, all orthogonal linear transformations $O_1$ that map $\L_1 \cap V_1$ to $\L_2 \cap V_2$, and recursively computes all orthogonal linear transformations $O_2$ that map $\pi_1(\L_1)$ to $\pi_2(\L_2)$. Finally, the algorithm checks for every such pair $(O_1,O_2)$ if the transformation $O$, defined on $\linspan(\L_1)$ by $O|_{V_1} = O_1$ and $O|_{\pi_1(\linspan(\L_1))}= O_2$, maps $\L_1$ to $\L_2$, and if so, inserts it to the output set.

\begin{algorithm}[ht]
    \caption{Lattice Isomorphism -- General Case}
    \textbf{Input:} Two bases of lattices $\L_1$ and $\L_2$ of rank $n$. \\
    \textbf{Output:} The set $\Output$ of all orthogonal linear transformations $O: \linspan(\L_1) \rightarrow \linspan(\L_2)$ for which $\L_2 = O(\L_1)$.
    \begin{algorithmic}[1]
        \ForAll{$i=1,2$}\label{line:loop_i}
            \State{$V_i \leftarrow \linspan(\{ x \in \L_i \mid \|x\| = \lambda_1(\L_i)\})$}\label{line:V_i}\Comment{Corollary~\ref{cor:allSValg}}
            \State{$\pi_i \leftarrow $ the projection of $\linspan(\L_i)$ to the orthogonal complement to $V_i$}\label{line:pi_i}
        \EndFor
        \If{$\rank(\L_1 \cap V_1) \neq \rank(\L_2 \cap V_2)$}
            \State{\textbf{return} $\emptyset$}\label{line:empty}
        \EndIf
        \State{$\Output_1 \leftarrow $ Lattice Isomorphism Special Case$(\L_1 \cap V_1,\L_2 \cap V_2)$}\label{line:callSpecial}\Comment{Algorithm~\ref{alg:ISO_new} (Theorem~\ref{thm:SpecialAlg})}
        \State{$\Output_2 \leftarrow $ Lattice Isomorphism General Case$(\pi_1(\L_1),\pi_2(\L_2))$}\label{line:callGeneral}\Comment{A recursive call}
        \State{$\Output \leftarrow \emptyset$}
        \ForAll{$O_1 \in \Output_1,O_2 \in \Output_2$}\label{line:loop12}
            \State{$O \leftarrow $ the linear transformation defined on $\linspan(\L_1)$ by $O|_{V_1} = O_1$ and $O|_{\pi_1(\linspan(\L_1))} = O_2$}\label{line:def_O}
            \If{$O$ satisfies $\L_2 = O(\L_1)$}\label{line:if_O_g}
                \State{$\Output \leftarrow \Output \cup \{O\}$}\label{line:O_g}
            \EndIf
        \EndFor
        \State{\textbf{return} $\Output$}
    \end{algorithmic}
    \label{alg:IsoGeneral}
\end{algorithm}

It is easy to see that the rank of the input lattices decreases in every recursive call of the algorithm. Therefore, the algorithm terminates, and its correctness follows from Lemma~\ref{lemma:general}.

We turn to show that the running time of Algorithm~\ref{alg:IsoGeneral} on lattices of rank $n$ is $n^{O(n)} \cdot s^{O(1)}$, where $s$ denotes the input size. As before, we ignore in the analysis terms which are polynomial in $s$. Denote by $r \leq n$ the number of recursive calls, let $n_i$ denote the rank of the input lattices of the $i$th recursive call, and observe that $\sum_{i=1}^{r}{n_i} = n$. We analyze the total running time of every step of Algorithm~\ref{alg:IsoGeneral} in all the $r$ recursive calls together.

Using Corollary~\ref{cor:allSValg}, it can be shown that the running time of computing the subspaces $V_1$ and $V_2$ and the projections $\pi_1$ and $\pi_2$ in the $i$th recursive call is $n_i^{O(n_i)}$. Hence, the total running time of the loop in line~\ref{line:loop_i} is $n^{O(n)}$. Given $V_i$ and $\pi_i$, by Lemmas~\ref{lemma:MicSubspace} and~\ref{lemma:MinimalLattice}, it is possible to compute in polynomial time bases for the lattices $\L_i \cap V_i$ and $\pi_i(\L_i)$. By Theorem~\ref{thm:SpecialAlg}, the output of Algorithm~\ref{alg:ISO_new} (line~\ref{line:callSpecial}) in the $i$th recursive call has size $n_i^{O(n_i)}$, and its computation requires running time $n_i^{O(n_i)}$. This implies that the total running time of the calls to Algorithm~\ref{alg:ISO_new} is $n^{O(n)}$, and that the total running time of the loop in line~\ref{line:loop12} is at most
\[\prod_{i=1}^{r}{n_i^{O(n_i)}} \leq \prod_{i=1}^{r}{n^{O(n_i)}} = n^{O(n)},\]
so the total running time of Algorithm~\ref{alg:IsoGeneral} is bounded by $n^{O(n)} \cdot s^{O(1)}$, as required. In addition, it follows that the number of linear transformations that the algorithm outputs is at most $n^{O(n)}$.

We finally note that it is not difficult to see that Algorithm~\ref{alg:IsoGeneral} can be implemented in polynomial space and in running time as before. To do so, for computing $V_i$ (line~\ref{line:V_i}) one has to enumerate the shortest nonzero vectors of $\L_i$ and to store only the ones which are linearly independent of the previously stored ones. Similarly, in the loop of line~\ref{line:loop12}, the elements of the $\Output_i$'s should not be stored but should be recursively enumerated in parallel. Since the depth of the recursion is at most $n$, all the linear transformations which together define a purported $O$ can be simultaneously stored in space complexity polynomial in the input size.
\end{proof}

\section{The Lattice Isomorphism Problem is in \texorpdfstring{$\SZK$}{SZK}}\label{sec:coAM}

In this section we present an $\SZK$ proof system for the complement of $\LIP$ implying Theorem~\ref{thm:coAMIntro}. To do so, we need some properties of the discrete Gaussian distribution on lattices, proven in the following section.

\subsection{Gaussian-Distributed Generating Sets}

In Lemma~\ref{lemma:SampleGenSet} below we bound the number of samples from the discrete Gaussian distribution $D_{\L,s}$ needed in order to get a generating set of $\L$ with high probability. We start with the following lemma.

\begin{lemma}\label{lemma:sublattice}
For every lattice $\L$ of rank $n$ and a strict sublattice $\calM \subsetneq \L$,
\begin{enumerate}
  \item\label{item:sublattice_lambdan} If $\linspan(\calM) \subsetneq \linspan(\L)$ and $s \geq c \cdot \lambda_n(\L)$ then $\Prob{x \in D_{\L,s}}{x \in \calM} \leq \frac{1}{1+e^{- \pi c^{-2}}}$.
  \item\label{item:sublattice_bl} If $s \geq c \cdot bl(\L)$ then $\Prob{x \in D_{\L,s}}{x \in \calM} \leq \frac{1}{1+e^{- \pi c^{-2}}}$.
\end{enumerate}
\end{lemma}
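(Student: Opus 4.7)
The plan is to reduce both items to a single calculation: as soon as we can exhibit a vector $v\in\L\setminus\calM$ with $\|v\|\le s/c$, the conclusion follows from Claim~\ref{claim:GaussianBound} applied to the sublattice $\calM$ and the shift $v$. The two items then differ only in how such a short $v$ is produced, using $\lambda_n(\L)$ in one case and $bl(\L)$ in the other.

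Concretely, assume we have $v\in\L\setminus\calM$ with $\|v\|\le s/c$. Applying Claim~\ref{claim:GaussianBound} to $\calM$ gives
\[
\rho_s(v+\calM)\;\ge\;\rho_s(v)\cdot\rho_s(\calM)\;\ge\;e^{-\pi c^{-2}}\cdot\rho_s(\calM),
\]
using $\rho_s(v)=e^{-\pi\|v\|^2/s^2}\ge e^{-\pi c^{-2}}$. Since $v\notin\calM$ the cosets $\calM$ and $v+\calM$ are disjoint, and both lie in $\L$, so
\[
\rho_s(\L)\;\ge\;\rho_s(\calM)+\rho_s(v+\calM)\;\ge\;\bigl(1+e^{-\pi c^{-2}}\bigr)\cdot\rho_s(\calM).
\]
Dividing gives $\Prob{x\in D_{\L,s}}{x\in\calM}=\rho_s(\calM)/\rho_s(\L)\le 1/\bigl(1+e^{-\pi c^{-2}}\bigr)$, which is exactly the desired bound. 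Note that Claim~\ref{claim:GaussianBound} is used with $\calM$ in the role of the lattice regardless of whether $\calM$ is full-rank in $\linspan(\L)$, so this step is uniform across both items.

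It remains to find the short $v$ in each case. For Item~\ref{item:sublattice_lambdan}, by the definition of $\lambda_n(\L)$ there exist $n$ linearly independent vectors in $\L$ of norm at most $\lambda_n(\L)\le s/c$. The subspace $\linspan(\calM)$ is a proper subspace of $\linspan(\L)$, so it cannot contain all $n$ of them; any vector among them that is not in $\linspan(\calM)$ is in particular not in $\calM$, and serves as $v$. For Item~\ref{item:sublattice_bl}, take a basis $B=(b_1,\ldots,b_n)$ of $\L$ achieving $\|B\|=bl(\L)\le s/c$. If every $b_i$ lay in $\calM$ then $\calM=\L$, contradicting $\calM\subsetneq\L$; thus some $b_i\in\L\setminus\calM$ with $\|b_i\|\le s/c$, and we take $v=b_i$.

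I do not expect a real obstacle here: the only mild subtlety is making sure Claim~\ref{claim:GaussianBound} is legitimately applied to $\calM$ even when $\calM$ is not full-rank in $\linspan(\L)$, which the statement and proof of that claim already support. Everything else is a one-line coset count plus the elementary inequality $\rho_s(v)\ge e^{-\pi c^{-2}}$.
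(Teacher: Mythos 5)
Your proposal is correct and follows essentially the same route as the paper: exhibit a short vector $v\in\L\setminus\calM$ (via the successive minima for Item~\ref{item:sublattice_lambdan}, via a basis attaining $bl(\L)$ for Item~\ref{item:sublattice_bl}), then apply Claim~\ref{claim:GaussianBound} to $\calM$ and the disjoint coset $v+\calM$ to get $\rho_s(\L)\ge(1+\rho_s(v))\rho_s(\calM)$ and hence the stated bound. The only difference is that you spell out the existence of the short vector in Item~\ref{item:sublattice_lambdan} slightly more explicitly than the paper does; the argument is the same.
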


\begin{proof}
For a lattice $\L$ and a strict sublattice $\calM$, let $w$ be a vector in $\L \setminus \calM$. Then, the lattice $\calM$ and its coset $w+\calM$ are disjoint and are both contained in $\L$. Using Claim~\ref{claim:GaussianBound}, we obtain that
\begin{eqnarray*}
\rho_s(\L) \geq \rho_s(\calM) + \rho_s(w+\calM) \geq (1+\rho_s(w)) \cdot \rho_s(\calM),
\end{eqnarray*}
which implies that
\[ \Prob{x \in D_{\L,s}}{x \in \calM} = \frac{\rho_s(\calM)}{\rho_s(\L)} \leq \frac{1}{1+\rho_s(w)}.\]

For Item~\ref{item:sublattice_lambdan}, observe that if $\linspan(\calM) \subsetneq \linspan(\L)$ then there exists a vector $w \in \L \setminus \calM$ such that $\|w\| \leq \lambda_n(\L)$. Applying the above argument with this $w$ for $s \geq c \cdot \lambda_n(\L)$ completes the proof. For Item~\ref{item:sublattice_bl}, recall that the lattice $\L$ is generated by a basis all of whose vectors are of norm at most $bl(\L)$. Since $\calM$ is a strict sublattice of $\L$ at least one of these vectors does not belong to $\calM$, so there exists a vector $w \in \L \setminus \calM$ such that $\|w\| \leq bl(\L)$. Apply again the above argument for $s \geq c \cdot bl(\L)$, and we are done.
\end{proof}

\begin{remark}
Note that the bounds given in Lemma~\ref{lemma:sublattice} converge to $1/2$ as the parameter $s$ increases.
\end{remark}

The following corollary resembles Corollary~3.16 in~\cite{Regev05}.
\begin{corollary}\label{cor:Ind}
For every lattice $\L$ of rank $n$ and $s \geq \lambda_n(\L)$, the probability that a set of $n^2$ vectors chosen independently from $D_{\L,s}$ contains no $n$ linearly independent vectors is $2^{-\Omega(n)}$.
\end{corollary}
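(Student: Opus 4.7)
The plan is to combine Item~\ref{item:sublattice_lambdan} of Lemma~\ref{lemma:sublattice} (applied with $c=1$) with an elementary batch-and-union-bound argument. The lemma will tell us that, conditional on any prefix of samples whose $\R$-linear span $V$ is a strict subspace of $\linspan(\L)$, the next sample lies in $V$ with probability at most $q := \frac{1}{1+e^{-\pi}} < 1$; equivalently, the dimension of the running span strictly increases with probability at least $p := 1-q$, a positive absolute constant. To invoke the lemma one observes that $\calM := \L \cap V$ is a strict sublattice of $\L$ with $\linspan(\calM) = V \subsetneq \linspan(\L)$, and that the hypothesis $s \geq \lambda_n(\L)$ supplies the required lower bound with $c = 1$.

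Given this, I would partition the $n^2$ samples $v_1,\ldots,v_{n^2}$ into $n$ consecutive batches of size $n$. For $1 \leq k \leq n$, let $E_k$ denote the event that either $\linspan(v_1,\ldots,v_{(k-1)n})$ already equals $\linspan(\L)$, or at least one vector in the $k$th batch lies outside that span. On the intersection $\bigcap_{k=1}^n E_k$, the dimension of the running span either has already reached $n$ or strictly increases in each batch, so after $n$ batches it reaches $n$; this is precisely the event that the $n^2$ samples contain $n$ linearly independent vectors.

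To bound $\Pr[\overline{E_k}]$, I would condition on the history through batch $k-1$. If the current span already equals $\linspan(\L)$ then $E_k$ holds deterministically; otherwise the $n$ samples of batch $k$ are conditionally i.i.d.\ from $D_{\L,s}$ and each lies in the current span with probability at most $q$ by Lemma~\ref{lemma:sublattice}, so $\Pr[\overline{E_k}\mid\text{history}] \leq q^n$. Taking expectations and applying a union bound over the $n$ batches yields
\[\Pr\bigl[\textstyle\bigcup_{k=1}^n \overline{E_k}\bigr] \;\leq\; n \cdot q^n \;=\; 2^{-\Omega(n)},\]
as required. There is no real obstacle; the only step that demands any thought is correctly identifying the sublattice $\calM = \L \cap V$ so that the hypothesis of Item~\ref{item:sublattice_lambdan} is met, and the constants are generous enough that a Chernoff bound would in fact give the stronger (but unneeded) tail $2^{-\Omega(n^2)}$.
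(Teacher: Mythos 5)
Your proposal is correct and follows essentially the same route as the paper's proof: the paper also splits the $n^2$ samples into $n$ batches of $n$, conditions on the prefix, applies Item~\ref{item:sublattice_lambdan} of Lemma~\ref{lemma:sublattice} to bound by $(1+e^{-\pi})^{-n}$ the probability that a whole batch falls inside the current span, and finishes with a union bound over the $n$ batches. Your explicit identification of the sublattice $\calM=\L\cap V$ is exactly what the paper uses implicitly, so there is no substantive difference.
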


\begin{proof}
Let $u_1,\ldots,u_{n^2}$ denote $n^2$ samples from $D_{\L,s}$. For every $1 \leq i \leq n$ let $A_i$ be the event that
\[\rank(\linspan(u_1,\ldots,u_{(i-1)n})) = \rank(\linspan(u_1,\ldots,u_{in})) < n.\]
Fix some $i$ and condition on a fixed choice of $u_1,\ldots,u_{(i-1)n}$ that span a subspace of rank smaller than $n$. Observe that, by Item~\ref{item:sublattice_lambdan} of Lemma~\ref{lemma:sublattice}, the probability that
\[u_{(i-1)n+1},\ldots, u_{in} \in \L \cap \linspan(u_1,\ldots,u_{(i-1)n})\]
is at most $(1+e^{-\pi})^{-n} = 2^{-\Omega(n)}$. This implies that the event $A_i$ happens with probability $2^{-\Omega(n)}$. Therefore, except with probability $2^{-\Omega(n)}$, none of the $A_i$'s happens, thus $u_1,\ldots,u_{n^2}$ contain $n$ linearly independent vectors.
\end{proof}

\begin{lemma}\label{lemma:SampleGenSet}
For every lattice $\L$ of rank $n$ satisfying $\det(\L) \geq 1$ and every $s \geq bl(\L)$, the probability that a set of \[n^2+n\log(s\sqrt{n})(n+20\log\log (s\sqrt{n}))\] vectors chosen independently from $D_{\L,s}$ does not generate $\L$ is $2^{-\Omega(n)}$.
\end{lemma}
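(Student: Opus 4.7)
My plan is a two-stage argument. In stage one I use the first $n^2$ samples to extract a full-rank sublattice $\calM_0 \subseteq \L$ whose index $|\L:\calM_0|$ can be controlled, and in stage two I use the remaining $T := n\log(s\sqrt n)(n+20\log\log(s\sqrt n))$ samples to extend $\calM_0$ all the way up to $\L$ by an index-halving chain argument combined with a Chernoff bound.

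For stage one, Corollary~\ref{cor:Ind} (whose hypothesis $s \ge \lambda_n(\L)$ is implied by $s\ge bl(\L)$) gives that the first $n^2$ samples contain $n$ linearly independent vectors, which generate $\calM_0$, except with probability $2^{-\Omega(n)}$. Lemma~\ref{lemma:BanaNorm} combined with a union bound over the $n^2+T$ samples gives that every single sample has norm at most $s\sqrt n$ except with probability $2^{-\Omega(n)}$ (the polynomial prefactor from the union bound is easily absorbed into the exponent). Assuming both events, $\calM_0$ is a rank-$n$ sublattice generated by $n$ vectors of norm at most $s\sqrt n$; Hadamard's inequality gives $\det(\calM_0)\le (s\sqrt n)^n$, and the hypothesis $\det(\L)\ge 1$ yields the key index bound
\[ N \;:=\; \lceil\log_2|\L:\calM_0|\rceil \;\le\; n\log_2(s\sqrt n)+1. \]

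For stage two, let $\calM_i$ denote the sublattice generated by the first $n^2+i$ samples, and call step $i$ a \emph{success} when $u_{n^2+i}\notin\calM_{i-1}$. The sequence $|\L:\calM_i|$ is integer-valued and non-increasing, and at each success the new index properly divides the old one, hence drops by a factor of at least two; thus $N$ successes suffice to force $\calM_T = \L$. By Item~\ref{item:sublattice_bl} of Lemma~\ref{lemma:sublattice} applied with $c=1$ (using the hypothesis $s\ge bl(\L)$), whenever $\calM_{i-1}\subsetneq\L$ the conditional probability of success is at least $p := 1/(1+e^\pi)$, a positive absolute constant. A standard coupling of the $X_i := \mathbf{1}[\mbox{success at step }i]$ with iid $\mathrm{Bern}(p)$ variables shows that the event $\{\calM_T\ne\L\}$ is contained in the event that the iid process has fewer than $N$ successes, so
\[ \Pr[\calM_T \neq \L] \;\le\; \Pr[\,\mathrm{Bin}(T,p) < N\,]. \]

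The final ingredient is a multiplicative Chernoff bound. Since $\mathbb{E}[\mathrm{Bin}(T,p)] = pT = pN(n+20\log\log(s\sqrt n))$, once $p(n+20\log\log(s\sqrt n))\ge 2$ the expectation is at least $2N$, and Chernoff gives $\Pr[\mathrm{Bin}(T,p)<N]\le \exp(-\Omega(pNn)) \le \exp(-\Omega(n))$. The main obstacle I anticipate is the boundary regime where $\log(s\sqrt n)$ is very small: there the index bound becomes weak, but I will use the integrality of $|\L:\calM_0|$ to argue that either $|\L:\calM_0|=1$, in which case $\calM_0=\L$ and the second batch of samples is unused, or $\log_2(s\sqrt n)\ge 1/n$, which is already enough slack for the Chernoff exponent to be $\Omega(n)$. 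The explicit $20\log\log(s\sqrt n)$ term in the sample count is precisely what guarantees the inequality $pT\ge 2N$ in regimes where $n$ is small relative to $\log(s\sqrt n)$, thereby allowing the same Chernoff-based bound to go through uniformly.
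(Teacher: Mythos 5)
Your proposal is correct, and its second stage takes a genuinely different route from the paper's. Stage one is the same as the paper's: the first $n^2$ samples contain $n$ linearly independent vectors (Corollary~\ref{cor:Ind}) of norm at most $s\sqrt{n}$ (Lemma~\ref{lemma:BanaNorm}), so the sublattice they generate has index at most $(s\sqrt{n})^n$ in $\L$. For stage two, however, the paper does not process samples one at a time: it groups the remaining samples into $h=n\log(s\sqrt{n})$ blocks of length $\ell=n+20\log\log(s\sqrt{n})$, lets $A_i$ be the event that block $i$ adds nothing while $\calM_{i-1}\subsetneq\L$, bounds $\Pr[A_i]\le(1+e^{-\pi})^{-\ell}\le 2^{-\Omega(n)-\log\log(s\sqrt{n})}$ via Item~\ref{item:sublattice_bl} of Lemma~\ref{lemma:sublattice}, and then union bounds over the $h$ blocks; there the $20\log\log(s\sqrt{n})$ term is precisely what cancels the (possibly enormous, since $s$ need not be $2^{\poly(n)}$) factor $\log(s\sqrt{n})$ coming from the union bound. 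Your domination-plus-Chernoff argument replaces that union bound by a single lower-tail deviation event with exponent $\Omega(pT)=\Omega\bigl(n\cdot n\log(s\sqrt{n})\bigr)$, which is arguably cleaner because a large $\log(s\sqrt{n})$ only helps; as your analysis implicitly shows, on this route the $20\log\log$ term is not needed at all once $n\ge 2(1+e^{\pi})$, and smaller $n$ are irrelevant to a $2^{-\Omega(n)}$ statement (also, your feared boundary regime of tiny $\log(s\sqrt{n})$ never arises for large $n$, since $s\ge bl(\L)\ge\lambda_n(\L)\ge\det(\L)^{1/n}\ge 1$ gives $\log(s\sqrt{n})\ge\tfrac12\log n$, though your integrality fallback would also do). One small correction: do not take the norm union bound over all $n^2+T$ samples --- $T$ grows with $\log s$ and is not polynomial in $n$, so that prefactor is not automatically absorbable; you only need the norms of the $n$ linearly independent vectors among the first $n^2$ samples (that is all the Hadamard bound uses), so restrict the union bound to those, exactly as the paper does.
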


\begin{proof}
Let $u_1,\ldots,u_{n^2}$ denote the first $n^2$ samples from $D_{\L,s}$. Since $s \geq bl(\L) \geq \lambda_n(\L)$, Corollary~\ref{cor:Ind} implies that, except with probability $2^{-\Omega(n)}$, they contain $n$ linearly independent vectors. By Lemma~\ref{lemma:BanaNorm} and the union bound, with a similar probability, each of these vectors has norm at most $s \cdot \sqrt{n}$. Denote by $\calM_0 \subseteq \L$ the sublattice generated by $u_1,\ldots,u_{n^2}$. By the assumption $\det(\L) \geq 1$, we obtain that the index of $\calM_0$ in $\L$ satisfies \[|\L : \calM_0| = \frac{\det(\calM_0)}{\det(\L)} \leq (s \cdot \sqrt{n})^n.\]

Now, define $h = n\log(s\sqrt{n})$ and $\ell = n+20\log \log (s\sqrt{n})$, and let $v_1,\ldots,v_{h \cdot \ell}$ be the remaining $h \cdot \ell$ vectors chosen from $D_{\L,s}$. For every $1 \leq i \leq h$ define the sublattice \[\calM_i = \L(u_1,\ldots,u_{n^2},v_1,\ldots,v_{i \cdot \ell}) \subseteq \L,\]
and let $A_i$ be the event that $\calM_{i-1} = \calM_{i} \subsetneq \L$. If none of the $A_i$'s happens, then $M_i = \L$ for some $i$ or $|\calM_{i} : \calM_{i-1}| \geq 2$ for every $i$. In the latter case it follows that \[|\L : \calM_{h}| \leq (s\sqrt{n})^n \cdot 2^{-h} = 1.\]
Therefore, in both cases the lattice $\calM_h$, which is generated by the $n^2+h \cdot \ell$ samples from $D_{\L,s}$, equals the lattice $\L$.

In order to complete the proof it remains to show that the probability of every event $A_i$ is at most $2^{-\Omega(n)-\log \log (s\sqrt{n})}$, as this implies by the union bound that the probability that at least one $A_i$ happens is at most \[h \cdot 2^{-\Omega(n)-\log \log (s\sqrt{n})} \leq 2^{-\Omega(n)}.\] So fix some $i$, condition on a fixed choice of $v_1,\ldots,v_{(i-1)\ell}$ which do not generate $\L$, and apply Item~\ref{item:sublattice_bl} of Lemma~\ref{lemma:sublattice} to obtain that the probability that all the vectors $v_{(i-1) \cdot \ell +1},\ldots,v_{i \cdot \ell}$ belong to $\calM_{i-1}$ is at most $(1+e^{-\pi})^{-\ell} \leq 2^{-\Omega(n)-\log \log (s\sqrt{n})}$. Thus, this is an upper bound on the probability that the event $A_i$ happens, so we are done.
\end{proof}

\subsection{\texorpdfstring{$\LIP$}{LIP} is in \texorpdfstring{$\SZK$}{SZK}}

\begin{theorem}\label{thm:LIPcoAM2}
$\LIP$ is in $\SZK$.
\end{theorem}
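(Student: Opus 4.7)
The plan is to use closure of $\SZK$ under complement~\cite{Okamoto00} and give an honest-verifier statistical zero-knowledge proof system for non-isomorphism, along the lines of the Graph Isomorphism / Code Equivalence template: the verifier chooses one of the two input lattices uniformly at random, sends the prover a ``fresh'' representative of its isomorphism class, and accepts iff the prover returns the index of the chosen lattice. The only thing to work out is how the verifier produces the fresh representative so that its distribution depends \emph{only} on the isomorphism class of the chosen lattice, not on the particular input basis.

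Concretely, given input bases $B_1, B_2$ of $\L_1, \L_2$ of rank $n$ (without loss of generality rescaled by a common factor so that $\det(\L_i) \geq 1$), the verifier uses Corollary~\ref{cor:KZalg} to compute Korkine--Zolotarev bases and picks a parameter $s$ of polynomial bit length satisfying $s \geq bl(\L_i)$ and $s \geq \max_j \|\tilde b_j\|\sqrt{\ln(2n+4)/\pi}$ for both $i \in \{1,2\}$; such an $s$ exists and is efficiently computable via Theorem~\ref{thm:KZ}. It then picks $i \in \{1,2\}$ uniformly, invokes $\mathsf{SampleD}(B_i, s)$ (Lemma~\ref{lemma:SampleD}) to obtain $N = n^2 + n\log(s\sqrt n)(n + 20\log\log(s\sqrt n))$ independent samples $u_1, \ldots, u_N \in \L_i$, and sends the prover their $N \times N$ Gram matrix $G$. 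The prover replies with $j \in \{1,2\}$, and the verifier accepts iff $j = i$. Precision issues are avoided by representing each $u_k$ by its integer coefficient vector in $B_i$ and computing $G$ from the input Gram matrix, so the message $G$ is an exactly represented rational matrix.

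Now I would verify the three properties. \emph{Soundness}: if $\L_1 \cong \L_2$ via an orthogonal $O$, then rotation-invariance of $\rho_s$ makes $O$ push $D_{\L_1,s}$ forward to $D_{\L_2,s}$, so the joint law of $(u_1,\ldots,u_N)$ and hence of $G$ is the same for $i=1$ and $i=2$; the prover's view is statistically independent of $i$, bounding any prover's success probability by $1/2$. \emph{Completeness}: by Lemma~\ref{lemma:SampleGenSet}, the samples generate all of $\L_i$ except with probability $2^{-\Omega(n)}$; Fact~\ref{fact:Gram} then guarantees that $G$ determines $\L_i$ up to isometry, so an unbounded prover can test $G$ against both $\L_1$ and $\L_2$ and return the unique matching index, which equals $i$. \emph{Honest-verifier SZK}: in the honest-prover execution the prover's reply is (up to a $2^{-\Omega(n)}$ event) just $i$, which the verifier already knows; the simulator therefore samples $i$, runs the verifier's sampling procedure to produce $G$, and outputs the transcript $(i, G, i)$, which is statistically close to a real one. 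The main technical content beyond the Graph/Code Equivalence template is Lemma~\ref{lemma:SampleGenSet}, which guarantees that polynomially many Gaussian samples generate the whole lattice rather than merely a proper sublattice; this is the step I would expect to be the real obstacle, but it has already been established using Lemma~\ref{lemma:sublattice} and a telescoping bound on $|\L : \calM_i|$, so the remainder of the argument is routine.
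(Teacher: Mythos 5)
Your protocol is the paper's proof almost verbatim: the same complement-of-$\LIP$ template, the same use of $\mathsf{SampleD}$ (Lemma~\ref{lemma:SampleD}) with the same number $N$ of samples, completeness via Lemma~\ref{lemma:SampleGenSet} and Fact~\ref{fact:Gram}, soundness via rotation-invariance of the Gaussian, the same trivial simulator, and the same observation that sending the (exact, rational) Gram matrix of the samples sidesteps precision issues.

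One step as written would fail, though it is easily repaired. An $\SZK$ (even $\HVSZK$) proof system requires a \emph{probabilistic polynomial-time} verifier, but you have the verifier ``use Corollary~\ref{cor:KZalg} to compute Korkine--Zolotarev bases'' in order to choose $s$; that algorithm runs in time $n^{O(n)}$ (KZ reduction requires repeated exact SVP computations), so the verifier is no longer polynomial time and the protocol as described does not place $\LIP$ in $\SZK$. The claim that $s$ is ``efficiently computable via Theorem~\ref{thm:KZ}'' conflates the existence bound on KZ bases with the (hard) task of computing one. No KZ computation is needed: as in the paper, take $s = \max(\|B_1\|,\|B_2\|)\cdot\sqrt{\ln(2n+4)/\pi}$ directly from the input bases. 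This satisfies the $\mathsf{SampleD}$ requirement because $\|\tilde b_j\| \leq \|b_j\| \leq \|B_i\|$, and it satisfies $s \geq bl(\L_i)$ because $bl(\L_i) \leq \|B_i\|$ for \emph{any} input basis $B_i$ (with the harmless convention that the $\sqrt{\ln(2n+4)/\pi}$ factor exceeds $1$ for all but constantly many $n$, or by inserting a $\max$ with $1$). With this change your argument coincides with the paper's; the determinant normalization you use ($\det \geq 1$ after a common rescaling, versus the paper's scaling to determinant exactly $1$ after rejecting unequal determinants) is an inessential variant, since Lemma~\ref{lemma:SampleGenSet} only needs $\det(\L) \geq 1$.
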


\begin{proof}
It is sufficient to prove that the {\em complement} of $\LIP$ has a statistical zero-knowledge proof system with respect to a honest verifier ($\HVSZK$)~\cite{Okamoto00}. This follows from the $\HVSZK$ proof system given in Algorithm~\ref{alg:coAM}. Let $B_1$ and $B_2$ be two bases of lattices of rank $n$, and define \[s = \max(\|B_1\|,\|B_2\|) \cdot \sqrt{\ln(2n+4)/\pi},\]
where $\|B\|$ denotes the norm of a longest vector in a basis $B$. It can be assumed without loss of generality that the lattices have determinant $1$, since if their determinants are distinct then they are clearly not isomorphic, and otherwise they can be scaled to have determinant $1$.

In the proof system, the verifier chooses uniformly at random an $i \in \{1,2\}$ and sends to the prover the Gram matrix $G = W^T \cdot W$, where the columns of $W$ are \[N = n^2+n\log(s\sqrt{n})(n+20\log\log (s\sqrt{n}))\] lattice vectors of $\L(B_i)$ independently chosen from $D_{\L(B_i),s}$ using the algorithm $\mathsf{SampleD}$ from Lemma~\ref{lemma:SampleD}. Finally, the verifier accepts if and only if the prover correctly guesses $i$.

\begin{algorithm}[ht]
    \caption{An $\HVSZK$ Proof System for the complement of $\LIP$}
    \textbf{Input:} Two bases $B_1$ and $B_2$ of lattices of rank $n$ with determinant $1$.
    \begin{algorithmic}[1]
        \State{$s \leftarrow \max(\|B_1\|,\|B_2\|) \cdot \sqrt{\ln(2n+4)/\pi}$}\Comment{Lemma~\ref{lemma:SampleD}}
        \State{Verifier chooses uniformly at random $i \in \{1,2\}$}
        \State{$N \leftarrow n^2+n\log(s\sqrt{n})(n+20\log\log (s\sqrt{n}))$}\Comment{Lemma~\ref{lemma:SampleGenSet}}
        \ForAll{$1 \leq j \leq N$}
            \State{$w_j \leftarrow \mathsf{SampleD}(B_i,s)$}\Comment{Lemma~\ref{lemma:SampleD}}
        \EndFor
        \State{Verifier sends $G = W^T \cdot W$ to the prover}
        \State{Prover returns $i' \in \{1,2\}$}
        \State{Verifier accepts if and only if $i=i'$}
    \end{algorithmic}
    \label{alg:coAM}
\end{algorithm}

By Lemma~\ref{lemma:SampleGenSet}, except with probability $2^{-\Omega(n)}$, the set $W$ generates the lattice $\L(B_i)$, so for simplicity we assume from now on that this is the case. The running time needed by the verifier in the protocol is clearly polynomial in the input size. So we turn to prove correctness, that is, that for every prover's strategy the verifier rejects $\YES$ instances with some non-negligible probability, whereas $\NO$ instances are accepted for some prover's strategy.

Assume that $(B_1,B_2)$ is a $\YES$ instance. This means that the lattices $\L(B_1)$ and $\L(B_2)$ are isomorphic, so there exists an orthogonal linear transformation $O:\linspan(B_1) \rightarrow \linspan(B_2)$ mapping $\L(B_1)$ to $\L(B_2)$. Recall that the matrix $W$ is chosen either from $D_{\L(B_1),s}^N$ or from $D_{O(\L(B_1)),s}^N$. Since $O$ preserves inner products, the Gram matrix of $W$ equals the Gram matrix of $O(W)$. Therefore, the distribution of the matrix $G$, which is sent to the prover, is independent of $i$. Hence, the probability that the verifier accepts is at most $1/2$.

Now, assume that $(B_1,B_2)$ is a $\NO$ instance, and consider the following strategy for the computationally unbounded prover: Given $G$, the prover returns the $i'$ for which there exists a vector set $W$ of size $N$ that generates $\L(B_{i'})$ and satisfies $G = W^T \cdot W$. To complete the proof it remains to show that this $i'$ is unique whenever $\L(B_1)$ and $\L(B_2)$ are not isomorphic. Indeed, if $W_1^T \cdot W_1 = W_2^T \cdot W_2$ then by Fact~\ref{fact:Gram} there exists an orthogonal linear transformation $O$ mapping $W_1$ to $W_2$, and this implies that the lattices generated by the sets $W_1$ and $W_2$ are isomorphic.

To complete the proof, it remains to observe that the presented proof system is statistical zero-knowledge, that is, the honest verifier ``learns nothing'' from the interaction with the prover other than the fact that the lattices are not isomorphic. To see this, consider the probabilistic polynomial-time simulator that runs the proof system playing the roles of both the honest verifier and the prover, where as prover it returns $i'$ which equals the $i$ chosen by the simulated verifier. Observe that the distribution of the transcript obtained from this simulation is statistically close to the one obtained from a run by a honest verifier and a prover.
\end{proof}

\begin{remark}
We remark that by fixing the answer of the prover in the unlikely event that the set $W$ does not generate the lattice, it follows that the complement of $\LIP$ has a honest verifier {\em perfect} zero-knowledge proof system.
\end{remark}

\section*{Acknowledgement}
We would like to thank Frank Vallentin for useful comments.

\bibliographystyle{abbrv}
\bibliography{latticeiso}


\end{document}